\newcommand{\ep}{\varepsilon}
\title{Comb inequalities for typical Euclidean TSP instances}
\author[1]{Wesley Pegden\thanks{Research Supported in part by NSF grant DMS1700365}}
\author[2]{Anish Sevekari}
\affil[1,2]{Department of Mathematical Sciences, Carnegie Mellon University}
\affil[1]{wes@math.cmu.edu}
\affil[2]{asevekar@andrew.cmu.edu}
\newcommand{\tsp}{\mathtt{TSP}}
\newcommand{\comb}{\mathtt{Comb}}
\newcommand{\hk}{\mathtt{HK}}
\newcommand{\XXX}{\mathcal{X}}
\newcommand{\YYY}{\mathcal{Y}}
\begin{document}
\maketitle
\begin{abstract}
    We prove that even in the average case, the Euclidean Traveling Salesman Problem exhibits an integrality gap of $(1+\ep)$ for $\ep>0$ when the Held-Karp Linear Programming relaxation is augmented by all comb inequalities of bounded size.  This implies that large classes of branch-and-cut algorithms take exponential time for the Euclidean TSP, even on random inputs.
\end{abstract}

\section{Introduction}
The Euclidean Traveling Salesman problem is a rare case where complexity theory and computational practice can both be said to have outpaced the other.

On the theory side: Papadimitriou showed that the Euclidean TSP is NP-hard, while Arora \cite{arora1996polynomial} and Mitchell \cite{mitchell1999guillotine} described polynomial-time approximation schemes (PTAS) for the Euclidean TSP.  On the computational side: efficient implementations of these PTASs have not materialized to supplant the use of heuristics without provable guarantees, while on the other hand, branch-and-cut methods using these heuristics with LP-based lower bounds nevertheless have found (provably) optimal tours in random or real-world (rather than worst-case) problem instances of large size; the current record is a problem instance from an application to integrated circuit design with 85,900 ``cities'' \cite{applegate2006traveling}.

Underpinning the tension in these developments is the unresolved status (even subject to standard complexity assumptions) of the hardness of finding optimal tours on \emph{typical}---rather than worst-case---instances of the Euclidean TSP:

\begin{ques}
	Is there a polynomial-time algorithm for the Euclidean TSP which, given a collection of $n$ independent rand:om points, returns an optimal tour with probability $p_n$ where $p_n\to 1$ as $n\to \infty$?
\end{ques}

\subsection{Branch-and-cut for the Euclidean TSP}

One of the most successful computational approaches in practice to find optimal tours for the Euclidean TSP is the \emph{branch-and-cut} approach, discussed by Applegate, Bixby, Chv\'atal and Cook \cite{applegate2006traveling}, and implemented in Cook's software package \emph{Concorde}.

Before discussing branch-and-cut, let us first recall that the more general \emph{branch-and-bound} approach is a combinatorial optimization paradigm based on pruning a branched exhaustive search.  In the context of finding optimal TSP tours, the approach combines (sub-optimal) algorithms for finding tours subject to restrictions (e.g., edge inclusions/exclusions), methods to establish lower bounds on tour lengths subject to restrictions, and a branching strategy which recursively partitions the exhaustive search space into complementary sets of restrictions.  Efficiency of the approach depends on lower bound methods being strong enough on restricted instances to match the global performance of upper bound (tour-finding) approaches to quickly prune large parts of the search space.

Within this paradigm, \emph{branch-and-cut} algorithms for the TSP specialize by using an LP relaxation lower bound for the TSP, which, for each constrained instance, can be augmented by an adaptive choice of cutting planes.  The algorithm \emph{branches}, partitioning a problem instance into a collection of problem instances with complementary restrictions, and then prunes by searching for \emph{cut}ting planes for each.

Frieze and Pegden \cite{frieze2015separating} showed that regardless of the tour-finding algorithm used for upper bounds (i.e., even if it actually finds optimal tours), the branch and bound decision tree will inevitably have exponential size if lower bounds are found via the Held-Karp LP-relaxation of the TSP, without any additional cutting planes \cite{held1971traveling}.

This \emph{Held-Karp lower bound} on the tour is defined by the linear program:
\[
	\arraycolsep=1.4pt\def\arraystretch{2.2}
	\begin{array}{crrl}
		\multicolumn{4}{c}{\displaystyle \min \sum_{\{i,j\} \subseteq V} c_{\{ij\}}x_{\{ij\}}}                                          \\
		\multicolumn{4}{c}{\mbox{ subject to }}                                                                                         \\
		\hline
		\textsc{(i)} \qquad   & (\forall i)                            & \displaystyle \sum_{j\neq i} x_{\{ij\}}           & =2         \\
		\textsc{(ii)} \qquad  & (\forall \emptyset \neq S\subsetneq V) & \displaystyle\sum_{\{i,j\} \subseteq S}x_{\{ij\}} & \leq |S|-1 \\
		\textsc{(iii)} \qquad & (\forall i<j\in V)                     & \displaystyle x_{\{ij\}}                          & \in [0,1]  \\
	\end{array} \labeleqn{eqn:hk}.
\]
Let $\hk(X)$ denote the value of this LP on a set $X$.  Note that under assumption \textsc{(i)} in \eqref{eqn:hk}, \textsc{(ii)} can be replaced by
\[ (\forall \emptyset \neq S \subsetneq V) \, \sum_{i \in S, j \notin S} x_{\set{ij}} \ge 2 \labeleqn{eqn:subtour_elim}\]
as shown in Section $58.5$ in \cite{schrijver2003combinatorial}; these are known as \emph{subtour-elimination} constraints.


The branch-and-cut approaches used to solve TSP instances of significant size go beyond the branch-and-bound framework considered by Frieze and Pegden, by using additional cutting planes to further prune the TSP search space.  Perhaps the most important class of such cutting planes are the so-called \emph{comb-inequalities} (which are valid for any solution $x$ corresponding to a TSP tour \cite{grotschel1986clique}).

\begin{definition}[Comb Inequality]
	\label{def:comb_inequality}
	Given sets $H$ and $T_1, \ldots, T_t$ for odd $t$, such that $T_i \cap T_j = \emptyset$ and
	$T_i \cap H \neq \emptyset$, the \emph{comb inequality} associated to these sets is given by
	\[ \sum_{\substack{i \in H \\ j \notin H}} x_{\set{ij}} + \sum_{k=1}^t \sum_{\substack{i \in T_k \\ j \notin T_k}} x_{\set{ij}} \ge 3t + 1 \]
	In this case, we call $H$ to be the \emph{handle} and $T_i$ to be the \emph{teeth} of comb inequality. We refer to $C = H \cup \brac[\big]{\cup_{k=0}^t T_k}$ as the comb and we will use the term \emph{size of the comb} to denote $\abs{C}$.
\end{definition}
\noindent

We will obtain in this paper a proof that polynomial-time branch-and-cut algorithms based on comb inequalities of bounded size cannot solve the Euclidean TSP on typical instances. In particular, let $\comb_c(X)$ denote the value of the LP obtained by adding all comb
inequalities with combs of size at most $c$ to the Held-Karp LP relaxation of TSP.  For a random set $\XXX_n$ of $n$ points in $[0,1]^d$, we prove:
\begin{theorem}
	\label{thm:bnb_algo_bound}
	Suppose that we use branch and bound to solve the TSP on $\XXX_n$, using $\comb_c$ as a lower bound for some fixed constant $c$. Then w.h.p, the algorithm runs in time $e^{\Omega(n/\polylog(n))}$.
\end{theorem}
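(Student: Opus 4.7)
Following the template of Frieze and Pegden \cite{frieze2015separating}, the plan is to reduce the branch-and-bound lower bound to an integrality gap statement for $\comb_c$ that is robust under partial edge restrictions. Concretely, I would show that w.h.p.\ one can exhibit $\Omega(n/\polylog n)$ pairwise disjoint \emph{local gadgets}, each supported on $O(1)$ points of $\XXX_n$, on which a prescribed fractional structure strictly beats every integer replacement by $\Omega(1/\sqrt n)$ while still satisfying every comb inequality of size at most $c$ on its vertex set. Summing local savings and combining with stability under branching gives the claimed $e^{\Omega(n/\polylog n)}$ lower bound on the number of leaves.

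\paragraph{Constructing the gadgets.}
The construction begins by tiling $[0,1]^2$ with $\Theta(n/\polylog n)$ cells, each expected to contain $\polylog n$ points of $\XXX_n$. A standard second-moment argument shows that w.h.p.\ $\Omega(n/\polylog n)$ cells contain a prescribed constant-size geometric configuration (for instance, four nearly-collinear or four nearly-square points). On each such configuration one installs a small $\tfrac12$-integral structure, such as a ``bowtie'' of six edges of weight $\tfrac12$ replacing a three-edge tour patch, which is strictly cheaper by a margin of order $1/\sqrt n$. Fixing a near-optimal tour $\tau$ on the complement of the gadgets and aggregating the local savings produces a feasible Held-Karp solution $x^*$ of cost at most $(1-\delta)\tsp(\XXX_n)$.

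\paragraph{Compatibility with bounded-size combs.}
The new ingredient beyond the Held-Karp case handled by Frieze and Pegden is verifying all comb inequalities of size at most $c$ for $x^*$. A comb of size $\le c$ has a vertex set of size at most $c$, which either lies entirely inside a single cell or contains a tooth straddling two cells. In the crossing case, $x^*$ agrees with the integer tour $\tau$ on every edge cut by the comb, so the inequality reduces to its value on $\tau$ and is automatically valid. In the entirely-local case, one chooses the gadget from a fixed finite family of templates that have been pre-verified, by a deterministic finite check, to satisfy every comb inequality of size at most $c$ supported on their $O(1)$ vertices; the probabilistic step then ensures that at least one such template is realizable in a positive-density fraction of cells.

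\paragraph{Lifting to branch-and-bound and main obstacle.}
At any node of the branch-and-bound tree, the set of edge restrictions has cardinality equal to the depth, so any single root-to-leaf path of length $o(n/\polylog n)$ touches restrictions in only $o(n/\polylog n)$ of the gadgets, leaving a constant fraction of the $\Omega(1/\sqrt n)$ local savings intact. A union bound over restriction patterns shows that w.h.p.\ $\comb_c$ at every such node remains strictly below the restricted TSP value, so the algorithm cannot prune and the tree must contain at least $e^{\Omega(n/\polylog n)}$ leaves. The main obstacle I expect is the gadget-design step in the previous paragraph: producing a finite family of $O(1)$-size fractional gadgets that are simultaneously strictly cheaper than every integer patch on their vertices, compatible with every comb inequality of size up to $c$ supported on those vertices, and placeable with positive density in a typical random point set. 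Making compatibility hold uniformly as $c$ grows (even as a fixed constant) is the most delicate piece of the argument.
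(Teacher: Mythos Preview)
Your overall architecture---find linearly many well-separated gadgets, replace the tour locally by a cheaper fractional pattern, then argue that few branch restrictions can spoil only few gadgets---is exactly the template the paper uses (inherited from \cite{frieze2015separating}). The genuine gap is in the comb-compatibility step, which you rightly flag as the main obstacle but whose proposed resolution does not work as stated.

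First, your case split ``comb entirely inside a single cell'' versus ``comb has a tooth straddling two cells'' is not exhaustive for the relevant dichotomy. A comb of size $\le c$ can sit entirely inside one cell yet have some vertices inside a gadget and some in the cell's non-gadget points; for such a comb the cut sets $\delta(H)$ and $\delta(T_i)$ include modified edges, so you cannot reduce to the tour $\tau$. Second, even for a comb whose vertex set lies entirely inside a gadget, the quantities $x(\delta(H))$ and $x(\delta(T_i))$ involve edges from those vertices to \emph{all} of $\XXX_n$, in particular the entry/exit edges of the gadget, so a ``deterministic finite check on $O(1)$ vertices'' is not a purely local computation---it depends on how the global tour enters and leaves. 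Third, the concrete gadgets you mention (four points, a bowtie of six half-edges) are fixed in size, whereas the claim must hold for every constant $c$; a tiny gadget will generally support a violated comb of small size.

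The paper resolves all three issues simultaneously by making the gadget size grow with $c$: it places $3k+2$ points (two concentric circles plus two ``gap'' vertices) with $k>c$, and arranges that the only weight-$\tfrac12$ edges are four triangles through the gap vertices, separated from the entry/exit points by at least $c-1$ integral edges along the outer circle. It then proves structural lemmas (\Cref{lemma:invalid_comb_struct}, \Cref{lemma:connected_handle}, \Cref{lemma:cycle_lemma}) showing that any comb violated by such a half-integral pattern must contain a gap vertex and, via a forced long path, at least $c$ vertices. Thus no comb of size $\le c$ is violated, regardless of how the tour enters the gadget. The branch-and-bound counting is then done not by a union bound over restriction patterns but by bounding, for each leaf $v$, the number of structured tours $\overline\Lambda_v$ compatible with $(I_v,O_v)$ and using $\overline\Lambda=\bigcup_v\overline\Lambda_v$.
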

Note that this gives a almost-exponential lower bound on the runtime of any branch and bound strategy. Further, we have a slightly more general version of this result when $c$ is not a constant, but with a slightly weaker but still super-polynomial lower bound on the runtime:
\begin{theorem}
	\label{thm:bnb_algo_bound_general}
	Suppose that we use branch and bound to solve the TSP on $\XXX_n$, using $\comb_c$ as a lower bound for $c = O\brac{\tfrac{\log n}{\log \log n}}$. Then w.h.p, the algorithm runs in time $e^{\Omega(n^{0.5})}$.
\end{theorem}
The set of all combs of size $\tfrac{\log n}{\log \log n}$ has size at least $n^{\Omega(\log n/(\log \log n))}$, which is super polynomial. It is not clear that there should be a polynomial-time separation algorithm for this set of comb-inequalities. The known results for separation of comb inequalities are for combs with a bounded number of teeth \cite{carr1997separation}, and combs that are derived in a specific way \cite{carr2004separation}.

\noindent The proof of the two theorems above theorem, along with a precise definition of the branch-and-bound paradigm we consider, can be found in \Cref{sec:bnb_algo_bound}.  The applicability of \Cref{thm:bnb_algo_bound,thm:bnb_algo_bound_general} to branch-and-cut follows from the fact that a branch-and-cut tree using only combs of size $\leq c$ contains as a subtree the corresponding branch-and-bound which uses $\comb_c$ as a lower bound.  The proofs of \Cref{thm:bnb_algo_bound,thm:bnb_algo_bound_general} depends on a new extension of probabilistic analyses of the Euclidean TSP and its LP relaxations.
\subsection{Probabilistic analysis of cutting planes for the Euclidean TSP}
The proof of Theorem \ref{thm:bnb_algo_bound} will depend on a probabilistic analysis of the impact of comb-inequality cutting planes on the value of the Held-Karp linear program \eqref{eqn:hk}.  In particular, if $x_1,x_2,\dots$ is a sequence of random points in $[0,1]^d$  and $\XXX_n=\{x_1,\dots,x_n\}$, we aim to show for any constant $c$ that for some $\ep>0$,
\[
	\comb_c(\XXX_n)\leq (1-\ep)\tsp(\XXX_n)\quad \mathrm{almost\, surely}\, (a.s.),
\]
where $\tsp(X)$ denotes the length of a shortest tour through $X$.  The random variable $\tsp(\XXX_n)$ was first studied by Beardwood, Halton and Hammersley \cite{beardwood1959shortest}. They proved in 1959 that there is an absolute constant $\beta^d_{TSP}$ such that the length
$\tsp(\XXX_n)$ of a minimum length TSP tour through $\XXX_n$ satisfies
\[ \tsp(\XXX_n) \sim \beta^d_{\tsp} n ^{\frac{d-1}{d}} \qquad a.s. \]
Here $a_n\sim b_n$ indicates that $a_n/b_n\to 1$.  This result has since been extended to many structures other than Hamiltonian cycles. Various similar results are also known for
problems like Minimum Spanning Tree \cite{beardwood1959shortest} and Maximum Matching
\cite{papadimitriou1978probabilistic}, etc. Steele \cite{steele1981subadditive} extended this result
to a more general framework which proves existence of such asymptotic constants $\beta_F$ for
\emph{subadditive Euclidean functional} $F$.
One peculiar feature of these results is that the true values of the constants are unknown, and even improvements on their estimates are rare.  Some results in this direction were proved in \cite{bertsimas1990asymptotic} and \cite{steinerberger2015new}.

Goemans and Bertsimas established in \cite{goemans1991probabilistic} an analogous asymptotic result for the Held-Karp linear program:
\[ \hk(\XXX_n) \sim \beta_{\hk}^d n^{\frac{d-1}{d}} \]
by proving that $\hk(X)$ is a subadditive Euclidean functional.  They asked in \cite{goemans1991probabilistic} whether $\beta_{\hk}^d = \beta_{\tsp}^d$; this was answered in the negative in the same paper \cite{frieze2015separating} showing that branch-and-bound with $\hk(\XXX_n)$ as a lower bound takes exponential time on typical inputs; Frieze and Pegden proved there that
\begin{equation}\label{eqn:FP}
	\beta_{\hk}^d < \beta_{\tsp}^d \qquad \forall d \ge 2.
\end{equation}

Let $\comb_c$ denote the value of the LP obtained by adding all comb
inequalities with combs of size at most $c$ to the Held-Karp LP relaxation of TSP. Since $\comb_c(X)
	\le \tsp(X)$ for all $x \in \mathbb{R}^d$, there is some constant $\gamma$ such that
\[ \limsup_{n \to \infty} \comb_c(\XXX_n) \cdot n^{-\frac{d}{d-1}} \le \gamma \quad \text{a.s.} \labeleqn{eqn:gamma_def} \]
Note that $\gamma = \beta_{\tsp}^d$ satisfies this equation.
\begin{definition}
	\label{def:gamma}
    Let $\Gamma$ denote the set of constants that satisfy \eqref{eqn:gamma_def}. Define
	\[ \gamma_{\comb}^{c,d} = \inf_{\gamma \in \Gamma} \gamma \]
\end{definition}
\noindent We claim that $\gamma_{\comb}^{c,d} \in \Gamma$. This holds since for all $m$, we have
\[ \pr{\limsup_{n \to \infty} \comb_c(\XXX_n) \cdot n^{-\frac{d-1}{d}} > \gamma_{\comb}^{c,d} + \frac{1}{m}} = 0 \]
By taking a countable union of all these events, we get that 
\[ \pr{\limsup_{n \to \infty} \comb_c(\XXX_n) \cdot n^{-\frac{d-1}{d}} > \gamma_{\comb}^{c,d}} = 0 \]
Proving that $\gamma_{\comb}^{c,d}$ satisfies \eqref{eqn:gamma_def} and lies in $\Gamma$.
With these definitions above, we will prove

\begin{theorem} For all constants $c$ and for all $d \ge 2$,
	\label{thm:comb_sep}
	\begin{equation}
		\label{eqn:comb_sep}
		\gamma_\comb^{c,d} < \beta_\tsp^d
	\end{equation}
\end{theorem}

\noindent The proof of \Cref{thm:comb_sep} appears in \Cref{sec:comb_ineqalities_separation}. In \Cref{sec:bnb_algo_bound} we show that this theorem implies \Cref{thm:bnb_algo_bound}.

\subsection{Notation}
Given a graph $G = (V,E)$ and $A,B \subset H$, let $\delta(A)$ denote the set of edges of $G$, with
exactly one vertex inside $A$. If $A,B$ are disjoint, then let $e(A,B)$ denote the set of edges
in $G$ with exactly one vertex in $A$ and one vertex in $B$.

\noindent A weight assignment $x$ is a function $x: E \mapsto \R$. Let $F \subset E$, then
\[ x(F) = \sum_{e \in F} x(e) \]
denotes the total weight of edges in $F$. In particular, $x(\delta(A))$ denotes the total weight leaving the set $A$, and $x(e(A,B))$ denotes the total weight of edges going from $A$ to $B$.

\section{Separating Constant Size Comb LP from TSP}
\label{sec:comb_ineqalities_separation}
Frieze and Pegden show in \cite{frieze2015separating} that for all $d \ge 2$, $\beta_\hk^d < \beta_\tsp^d$. They prove the result by constructing a gadget such that the length of any tour while passing through the gadget is significantly larger than the total contribution of a solution satisfying subtour elimination constraints. They then prove that suitable approximations to this gadget occur frequently enough in random set to ensure that the an LP solution can be found of length $(1-\ep)\tsp(\XXX_n)$. We now define this gadget $S(k)$.

\begin{definition}
  \label{d:s_k}
  The gadget $S(k)$ consists of $2k$ equally spaced points on the circle of radius $4$ and $k$ equally spaced points on the circle of radius $1$, along with the points $(2,0)$ and $(-2,0)$, which we refer to as the \emph{gap vertices}.
\end{definition}
\begin{figure}[t]
	\label{fig:original_single_entry}
	\centering
	\includegraphics{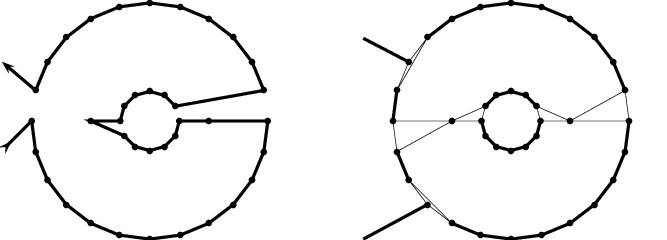}
	\caption{Solution when tour enters the gadget only once. Thick edges have value $1$ and thin edges have value $0.5$.}
\end{figure}

\begin{figure}[t]
	\label{fig:original_double_entry}
	\centering
	\includegraphics{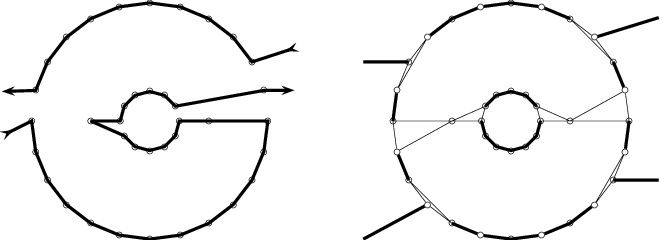}
	\caption{Solution when tour enters the gadget exactly twice. Thick edges have value $1$ and thin
	edges have value $0.5$.}
\end{figure}

\Cref{obs:gadget_entry_bound} (Observation $3.10$ from \cite{frieze2015separating}) states that we can enter a copy of this gadget at most twice.
\Cref{fig:original_single_entry} shows the gadget with a TSP (on the left) and corresponding Held-Karp solution (on the right) when the TSP enters/leaves the gadget just once, while \Cref{fig:original_double_entry} shows the same when the tour enters the same figure at most twice.
Note that in both the cases, the tour crossed the gap between smaller and larger circle roughly 3 times, while the half-integral solution (on the right) crosses this gap only twice (since the edges crossing the gap have weight $0.5$). Thus there is a constant gap between values of these solutions.

The proof in \cite{frieze2015separating} of \eqref{eqn:FP} incorporates the following two observations. Before stating them, we will recall an important definition from \cite{frieze2015separating}:

\begin{definition}
	\label{def:ep_d_copy}
	Consider a set $X \subset \R^n$. A set $T \subset X$ is $(\ep,D)$-copy of $S \subset \R^n$ if there is a set $S' \cong S$\footnote{For $S', S \in \R^n$, we write $S' \cong S$ if there is an isometry of $\R^n$ that maps $S$ to $S'$.} and a bijection $f$ between $T$ and $S'$ such that for all $x \in T$, $\norm{x - f(x)} < \ep$, and such that $T$ is at distance $> D$ from $X \setminus T$.
    
\end{definition}

Note that when we refer to a scaled $(\ep,D)$-copy of $S$, by say a factor $t$, we mean a $(t\ep, tD)$-copy of $t \cdot S$.

\begin{observation}[Observation 3.10 from \cite{frieze2015separating}]
\label{obs:gadget_entry_bound}
	Suppose that $S_{\ep,D}$ is an $(\ep,D)$ copy of any fixed set $S$ for fixed $\ep$
	and sufficiently large $D$. Then there are at most $2$ pairs of edges in a shortest TSP tour
	which join $S_{\ep,D}$ to $V \setminus S_{\ep,D}$.
\end{observation}

\begin{observation}[Observation 3.1 from \cite{frieze2015separating}]
\label{obs:smiley_face_lemma}
	Let $\set{ Y_1, Y_2, \ldots }$ be a sequence of points drawn uniformly at random
	from $[0,t]^d$ and $\mathcal{Y}_n=\{Y_1,\dots,Y_n\}$, where $t = n^{1/d}$. Given any finite point set $S$, any $\ep > 0$, and any
	$D$, $\mathcal{Y}_n$ a.s. contains at least $C_{\ep,D}^S n$ $(\ep,D)$-copies of $S$,
	for some constant $C_{\ep,D}^S > 0$.
\end{observation}

The structure of the proof of \Cref{eqn:FP} from \cite{frieze2015separating} is than as follows:
\begin{enumerate}[(i)]
	\item For $\YYY_n=t\cdot \XXX_n,$ \Cref{obs:smiley_face_lemma} ensures that we can choose a large constant $D$ and a small constant $\ep>0$ and find linearly many $(\ep,D)$-copies of the gadget described above.
	\item By \Cref{obs:gadget_entry_bound}, for each $(\ep,D)$ copy of the gadget, the shortest tour through $\YYY_n$ has either one or two components when restricted to the gadget.
	\item For both of these two possible cases, in each approximate copy of the gadget, the tour can be locally shortened by relaxing to a (half-integral) LP solution as in \Cref{fig:single_entry}.
	\item In total these shorten the tour by $\delta\cdot n$ for some $\delta>0$, which establishes \eqref{eqn:FP} since after rescaling by the factor $t$ we have that $\tsp(\YYY_n)\sim \beta^d_\tsp n.$
\end{enumerate}

To extend this approach to prove \Cref{thm:comb_sep}, we will do the following:
\begin{enumerate}[(1)]
	\item Construct a local half-integral solution on $S = S(k)$ assuming that tour visits $S$ exactly onces, entering and exiting through adjacent vertices (satisfying \Cref{a:adjecent_entry}).
	\item Prove that this solution satisfies all comb inequalities of size $c$ for $k = O(c)$ .
	\item Construct a gadget $\Pi^3(S)$ that contains $12$ copies of $S$ and any optimal tour through $\Pi^3(S)$ must go through at least one copy of $S$ while satisfying \Cref{a:adjecent_entry}.
\end{enumerate}

\noindent To begin, we prove some structural lemmas about combs.

\subsection{Technical lemmas for comb inequalities}
For the lemmas in this section, we suppose that $x$ is a half-integral solution to the Held-Karp LP, which has the property that all the edges of weight $1/2$ in $x$ form a graph that can be written as a union of edge disjoint triangles.
\begin{lemma}
\label{l:invalid_comb_struct}
	If $C$ is a comb violated by $x$ with handle $H$ and teeth $T_i$ for
	$i = 1 \ldots t$ for odd $t$, then following must hold:
	\begin{enumerate}[nosep]
		\item $x(\delta(H)) = t$
		\item $x(\delta^*(H)) = 0$
		\item $x(\delta(T_i)) = 2$ for all $i$.
		\item $x(e(A_i,B_i)) = 1$.
		\item $x(e(A_i,H \setminus A_i)) = 1$.
		\item $x(e(B_i,X \setminus (H \cup T_i))) = 1$.
	\end{enumerate}
	where $A_i = T_i \cap H$, $B_i = T_i \setminus H$ and $\delta^*(H)$ denotes the edges with
	exactly one endpoint inside $H$, and at least one endpoint outside $\bigcup_{i=1}^t T_i$.
\end{lemma}
\begin{proof}
Suppose $x$ violates the comb inequality $C$ with handle $H$ and teeth $T_1, \ldots, T_t$
for odd $t$.

Since this is a comb inequality, we know that $T_i$ intersect $H$, and are pairwise disjoint.
For each $i$, define $A_i = T_i \cap H$ and $B_i = T_i \setminus H$. For any set two sets $S, T$,
let $e(S,T)$ denote the set of edges with one endpoint in $S$ and another in $T$, and let
$\delta(S)$ denote the set of all edges with exactly one endpoint in $S$. Let $x$ denote 
the solution of LP that we are considering. That is, for any edge $e$, $x(e)$ denote the value 
associated to that edge. For any set $U \subseteq E$,
\[ x(U) = \sum_{e \in U} x(e) \]
is the total weight of the set of edges.

The comb-inequality constraint is given by
\[ x(\delta(H)) + \sum_{i=1}^t x(\delta(T_i)) \ge 3t + 1 \]
Since the comb inequality is not valid for the solution, we have
\[ x(\delta(H)) + \sum_{i=1}^t x(\delta(T_i)) < 3t + 1 \]

From subtour elimination, we have $x(\delta(A_i)) \ge 2$ and $x(\delta(B_i)) \ge 2$. Since $A_i$
and $B_i$ partition $T_i$, we have
\[ x(\delta(T_i)) = x(\delta(A_i)) + x(\delta(B_i)) - 2 x(e(A_i,B_i)) \labeleqn{eqn:cut_split} \]
Let $\delta^*(H)$ denote all the edges exiting $H$ that have are not contained inside a single tooth.
\[ \delta^*(H) = \delta(H) \setminus \left( \bigcup_{i=1}^t e(A_i,B_i) \right) \]
Substituting this into the comb inequality,
\[ x(\delta^*(H)) + \sum_{i=1}^t\big( x(e(A_i,B_i)) + x(\delta(T_i))\big) < 3t + 1 \labeleqn{eqn:comb_ineq} \]
Because of subtour elimination constraints, we have $x(\delta(T_i)) \ge 2$ for all $i$, which gives
\begin{align*}
	x(\delta^*(H)) + \sum_{i=1}^t x(e(A_i,B_i)) & < t + 1\\
	\implies \sum_{i=1}^t x(e(A_i,B_i)) & < t + 1 - x(\delta^*(H)) \labeleqn{eqn:sum_upper_bound}
\end{align*}
on the other hand, \eqref{eqn:cut_split} gives
\[ 
x(\delta^*(H)) + \sum_{i=1}^t \bigg(x(e(A_i,B_i)) + x(\delta(A_i)) + x(\delta(B_i)) - 2 x(e(A_i,B_i))\bigg) < 3t + 1 \]
\begin{align*}
	\implies \sum_{i=1}^t \bigg( x(\delta(A_i)) + x(\delta(B_i))\bigg) - 3t - 1 + x (\delta^*(H)) & < \sum_{i=1}^t x(e(A_i,B_i)) \\
	\implies t - 1 + x(\delta^*(H)) & < \sum_{i=1}^t x(e(A_i,B_i)) \labeleqn{eqn:sum_lower_bound}
\end{align*}
Combining the both, we have
\[ t - 1 + x(\delta^*(H)) < \sum_{i=1}^t x(e(A_i,B_i)) < t + 1 - x(\delta^*(H)) \labeleqn{eqn:sum_bound} \]
This immediately forces only two possible values of $x(\delta^*(H))$, either $0$ or $1/2$.

Substituting the lower bound \eqref{eqn:sum_lower_bound} into \eqref{eqn:comb_ineq}, we get
\begin{align*}
	x(\delta^*(H)) + \sum_{i=1}^t x(\delta(T_i)) + t - 1 + x(\delta^*(H)) & < 3t + 1 \\
	\implies \sum_{i=1}^t \left(x(\delta(T_i)) - 2\right) & < 2 - 2 x(\delta^*(H))
\end{align*}

Recall that edges of weight $1/2$ in $x$ form a graph that can be written as union of edge 
disjoint triangles. For any set $S$, any triangle can have either exactly $2$ edges crossing it,
or no edges crossing it. Hence, for any $S$, a triangle with all edges of weight $1/2$ contributes
either $1$ or $0$ to $x(\delta(S))$. Since we can decompose all the edges of weight $1/2$ into
edge disjoint triangles, no edges are double counted while adding up elements in $\delta(S)$, so 
for each set $S$, $x(\delta(S))$ is an integer. Now, observing the equation above, we can note that 
there is at most one $T_i$ for which $x(\delta(T_i)) = 3$. Further, even this cannot happen if
$x(\delta^*(H)) = 1/2$.

Now we are left with three cases, namely:
\begin{enumerate}[nosep]
	\item $x(\delta^*(H)) = 1/2$ and $x(\delta(T_i)) = 2$ for all $i$.
	\item $x(\delta^*(H)) = 0$, $x(\delta(T_1)) = 3$ and $x(\delta(T_i)) = 2$ for all $i \neq 1$.
	\item $x(\delta^*(H)) = 0$ and $x(\delta(T_i)) = 2$ for all $i$.
\end{enumerate}
we will show that only case $(3)$ can happen.

\case In this case,
\[ x(\delta(T_i)) = x(\delta(A_i)) + x(\delta(B_i)) - 2 x(e(A_i,B_i)) = 2 \]
since $x(\delta(A_i)), x(\delta(B_i)) \ge 2$, this gives $x(e(A_i,B_i)) \ge 1$. Substituting this
in \eqref{eqn:sum_upper_bound} gives
\[ t + 1 - \frac{1}{2} > \sum_{i=1}^n x(e(A_i,B_i)) \ge t \]
Since the sum only takes half integral values, this forces the value of the sum to be $t$.
So, $x(e(A_i,B_i)) = 1$ for all $i$. Now, note that
\[ x(\delta^*(H)) = x(\delta(H)) - \sum_{i=1}^t x(e(A_i,B_i)) = x(\delta(H)) - t \]
which implies that $x(\delta^*(H))$ is an integer since $x(\delta(H))$ is an integer, and hence
must be zero, forcing us to be in \Cref{c:invalid_comb_struct} instead.

\case In this case, by the same argument as in case $(1)$, we have 
$x(e(A_i,B_i))$ $\ge 1$ for all $i > 1$, and $x(e(A_1,B_1)) \ge 1/2$. Substituting these
values in \eqref{eqn:comb_ineq} gives
\[ t - \frac{1}{2} \le \sum_{i=1}^t x(e(A_i,B_i)) < 3t + 1 - \sum_{i=1}^t x(\delta(T_i)) = t \]
which forces equality on the left since summation only takes integer values. Therefore,
$x(e(A_1,B_1)) = 1/2$, and thus there is exactly one edge of weight $1/2$ between $A_1$ and $B_1$.
This edge is part of a triangle, whose vertex must lie outside $T_i$. But, then it contributes to
$x(\delta^*(H))$, and will contradict the assumption that $x(\delta^*(H)) = 0$. Therefore, case
$(2)$ can't hold either, which means we are in \Cref{c:invalid_comb_struct}

\case\label{c:invalid_comb_struct} Now we have $x(\delta^*(H)) = 0$, and hence if there is an edge of weight
$1/2$ in $e(A_i,B_i)$, then the unique triangle containing that edge in the decomposition must also
be completely contained in $T_i$. Therefore, every triangle with edges of weight $1/2$ contributes
either $1$ or $0$ to $x(e(A_i,B_i))$. Further, by the same argument as in analysis in case $(1)$,
$x(e(A_i,B_i)) \ge 1$ for all $i$, and using \eqref{eqn:sum_upper_bound} gives
\[ t + 1 > \sum_{i=1}^t x(e(A_i,B_i)) \ge t \]
Forcing the following equalities for all $i$:
\begin{enumerate}[nosep]
	\item $x(e(A_i,B_i)) = 1$.
	\item $x(e(A_i,H \setminus A_i)) = 1$.
	\item $x(e(B_i,X \setminus (H \cup T_i))) = 1$.
\end{enumerate}
and these are the only non empty boundary crossings with respect to $x$ for $A_i,B_i$. Thus, each
of these boundaries is either an edge of weight $1$, or a triangle with two edges of weight $1/2$
crossing the boundary. This completes the proof.
\end{proof}

Now we are ready prove a couple of trivial lemmas. But first, we will define \emph{induced subgraphs} with respect to an assignment $x$.

\begin{definition}
	\label{def:induced_subgraph}
	Given a set $X \in \R^n$, and an assignment $x$, for every subset $Y \subseteq X$, we define $G[Y]$ to be the graph with vertex set $Y$ and edges $e$ with both endpoints in $Y$ such that $x(e) > 0$.
\end{definition}

\begin{lemma}For any comb $C$ violated by $x$, with teeth $T_i$, the induced subgraph $G[T_i]$ is connected for all teeth $T_i$. 
\end{lemma}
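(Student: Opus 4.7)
The plan is to argue by contradiction using the structural conclusions of \Cref{lemma:invalid_comb_struct}, in particular the fact that $x(\delta(T_i)) = 2$ for every tooth. Suppose some $G[T_i]$ is disconnected. Because the definition of comb inequality forces $A_i = T_i \cap H$ and $B_i = T_i \setminus H$ to both be nonempty, we have $|T_i| \ge 2$, and we can partition $T_i = U \sqcup V$ into two nonempty parts with no support edges between them, i.e.\ $x(e(U,V)) = 0$.

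Next, I would write down the standard cut identity obtained by splitting $\delta(U)$ and $\delta(V)$ into edges across $U, V$ and edges leaving $T_i$:
\[ x(\delta(U)) + x(\delta(V)) = 2\, x(e(U,V)) + x(\delta(T_i)). \]
Substituting $x(e(U,V)) = 0$ and $x(\delta(T_i)) = 2$ from \Cref{lemma:invalid_comb_struct}, the right-hand side equals $2$. On the other hand, the subtour elimination constraint \eqref{eqn:subtour_elim} applied to both $U$ and $V$ yields $x(\delta(U)) \ge 2$ and $x(\delta(V)) \ge 2$, so the left-hand side is at least $4$. This is the desired contradiction, and the lemma follows.

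I do not expect any significant obstacle here; the only subtlety is making sure that the partition $U \sqcup V$ is into nonempty sets so that subtour elimination applies to each side, which is guaranteed by the fact that teeth of a comb have at least one vertex inside and one outside $H$. Everything else is a one-line cut computation combined with the already-established value $x(\delta(T_i)) = 2$.
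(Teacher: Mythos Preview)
Your proposal is correct and follows essentially the same approach as the paper: both argue by contradiction, split a disconnected $T_i$ into nonempty pieces with no support edges between them, apply subtour elimination to each piece to get total boundary weight at least $4$, and contradict $x(\delta(T_i)) = 2$ from \Cref{lemma:invalid_comb_struct}. The paper's version is just a terser statement of the same computation; your remark about $A_i, B_i$ being nonempty is harmless but unnecessary, since disconnectedness of $G[T_i]$ already guarantees a partition into two nonempty pieces.
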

\begin{proof}
	Suppose not, then applying subtour elimination constraint on each connected component (there are at least two) gives $x(\delta(T_i)) \ge 4$.
\end{proof}
\begin{lemma}
	\label{lemma:connected_handle}
	Consider an assignment $x$ and a comb $C$ with handle $H$ and teeth $T_i$ such that $x$ violates the comb inequality corresponding to the comb $C$. If $C$ is the comb with least number of teeth such that $x$ violates $C$, then the induced subgraph $G[H]$ is connected.
\end{lemma}
\begin{proof}
	Suppose not, and let $H_i$ be the connected components of $G[H]$. Let $\alpha_i = \set{j : T_j 
	\cap H_j \neq \emptyset}$ denote the set of teeth intersecting $H_i$. Note that by
	\Cref{l:invalid_comb_struct}, edges exiting any teeth into the handle must have weight $1$.
	This and the fact that weight $1/2$ edges form a graph that can be decomposed into edge disjoint
	triangles imply that a tooth can't interest two different connected components of the handle.
	Then, by the constraints in	\Cref{l:invalid_comb_struct}, $x(\delta(H_i)) = \abs{\alpha_i}$
	since edges in $H_i$ can only exit through some teeth $T_j$ with $j \in \alpha_i$, and they must
	exit with weight $1$. Therefore, it follows that
	\[ x(\delta(H_i)) + \sum_{j \in \alpha_i} x(\delta(T_i)) = 3 \abs{\alpha_i} \]
	Since at least one of the $\alpha_i$ must be odd, this gives us a smaller comb on which the
	solution violates the comb inequality, contradicting minimality of $H$.
\end{proof}
\begin{lemma}
	Any comb violated by $x$ must contain an edge of weight $1/2$ inside it.
\end{lemma}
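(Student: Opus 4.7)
The plan is to argue by contradiction using a one-line parity count on the handle. Assume the comb $C = H \cup (\bigcup_i T_i)$ is violated by $x$, and suppose for contradiction that no edge with both endpoints in $C$ has weight $1/2$. Then in particular every edge with both endpoints in $H$ has weight in $\set{0,1}$, so the total weight $x(E(H))$ of edges inside the handle is an integer.

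Next, I would sum the Held-Karp degree constraint $x(\delta(v)) = 2$ over $v \in H$. This double-counts every edge inside $H$ and single-counts every edge of $\delta(H)$, yielding
\[ 2\,x(E(H)) + x(\delta(H)) \;=\; 2|H|. \]
Applying the structural fact $x(\delta(H)) = t$ from \Cref{lemma:invalid_comb_struct}, this rearranges to $x(E(H)) = |H| - t/2$. Since the comb definition forces $t$ odd, the right-hand side is a half-integer, contradicting the integrality of $x(E(H))$ established above. Hence some edge of weight $1/2$ must lie inside $C$.

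There is essentially no main obstacle: the entire argument is the parity observation, made available by the identity $x(\delta(H)) = t$ from the previous lemma together with the oddness of $t$. The only sanity check is that the $1/2$-edge we produce actually lies \emph{inside} the comb, i.e.\ has both endpoints in $C$; but since both endpoints sit in $H \subseteq C$, this is automatic, and the triangle-decomposition hypothesis on the weight-$1/2$ edges is not even needed for this particular lemma.
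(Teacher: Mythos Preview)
Your proof is correct and uses essentially the same parity idea as the paper: both derive a contradiction from $x(\delta(H)) = t$ odd (via \Cref{lemma:invalid_comb_struct}) together with the integrality of all edges inside $H$. The paper phrases this as a path-decomposition of the weight-$1$ subgraph on $H$ (each path contributes $2$ to $x(\delta(H))$, forcing it even), whereas you extract the same parity via the degree-sum identity $2\,x(E(H)) + x(\delta(H)) = 2|H|$; your version is slightly cleaner in that it never needs to check the weights of edges in $\delta(H)$.
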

\begin{proof} Suppose not.
	Note that edges exiting the handle exit through a tooth, so all of them must have weight $1$ by \Cref{l:invalid_comb_struct}.
	Since all the edges intersecting the handle have weight $1$, we can split the handle into connected
	components, which are paths. Note that each path contributes exactly $2$ to $x(\delta(H))$, and
	thus $x(\delta(H))$ must be even, which contradicts that $x(\delta(H)) = t$ is odd.
\end{proof}

\begin{definition}
	For any set $S$, define $E(S,n)$ to be the size of the smallest set $T \supseteq S$ such that
	$x(\delta(T)) \le n$.
\end{definition}
Note that $x(\delta(T_i)) = 2$ and $x(\delta(H)) = t$. Hence, a handle can only contain sets that
have small $E(S,t)$ values and a tooth can only contain sets that have small $E(S,2)$ values.

\begin{lemma}
	\label{lemma:cycle_lemma}
	Let $S \subset T$ be sets such that for all $u \in S$, $x(e(u, T \setminus S)) \le 1$.
	Suppose $x(\delta(S)) = n$ and $x(\delta(T)) = n - 1$.
	Then there are two vertices $u,v \in S$ such that $T$ contains a path from $u$ to $v$ outside $S$.
\end{lemma}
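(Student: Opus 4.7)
The plan is to argue by contradiction: suppose that every connected component of the induced subgraph $G[T \setminus S]$ (formed by edges of positive $x$-weight) is adjacent to at most one vertex of $S$. I will then derive a contradiction from the cut values together with the local hypothesis $x(e(u, T \setminus S)) \le 1$.

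First I would decompose the two cuts according to where the far endpoint lies. Since $S \subseteq T$, every edge in $\delta(S)$ exits either into $T \setminus S$ or into $V \setminus T$, and every edge in $\delta(T)$ has its $T$-endpoint in $S$ or in $T \setminus S$, giving
\[ x(\delta(S)) = x(e(S, T \setminus S)) + x(e(S, V \setminus T)), \quad x(\delta(T)) = x(e(S, V \setminus T)) + x(e(T \setminus S, V \setminus T)). \]
Subtracting the two hypotheses produces the key identity
\[ x(e(S, T \setminus S)) - x(e(T \setminus S, V \setminus T)) = n - (n-1) = 1. \]

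Next, let $K_1, \ldots, K_m$ be the connected components of $G[T \setminus S]$. Since edges between distinct components carry weight zero, each side of the identity splits as a sum over the $K_j$, and moreover $x(\delta(K_j)) = x(e(K_j, S)) + x(e(K_j, V \setminus T))$. For each $j$ we have $x(\delta(K_j)) \ge 2$: by subtour elimination when $|K_j|>1$, and by the degree constraint of \eqref{eqn:hk} when $K_j$ is a singleton. Call $K_j$ \emph{attached} if it has a (necessarily unique) neighbor $u_j \in S$, and \emph{isolated} otherwise. For an attached component, $x(e(K_j, S)) = x(e(K_j, \{u_j\})) \le x(e(u_j, T \setminus S)) \le 1$, while for an isolated component $x(e(K_j, S)) = 0$; in both cases $x(e(K_j, V \setminus T)) \ge 2 - x(e(K_j,S))$.

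Summing over $j$, writing $m_1$ for the number of attached and $m_0 = m - m_1$ for the number of isolated components,
\[ x(e(S, T \setminus S)) \le m_1, \qquad x(e(T \setminus S, V \setminus T)) \ge m_1 + 2 m_0 \ge m_1, \]
so $x(e(S, T \setminus S)) \le x(e(T \setminus S, V \setminus T))$, contradicting the identity above. The computation itself is routine; the only delicate point is to remember that singleton components of $T \setminus S$ still satisfy $x(\delta(K_j)) \ge 2$ via the HK degree constraint rather than subtour elimination, and to verify that positive-weight edges between different $K_j$'s cannot exist, which is exactly how the components were defined.
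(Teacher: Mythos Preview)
Your proof is correct and follows essentially the same contradiction-via-cut-decomposition strategy as the paper. The only difference is organizational: the paper indexes the pieces of $T\setminus S$ by the vertex $u\in S$ they attach to (defining sets $P_u$ and a leftover $T^*$) and then proves $x(\delta(P_u\cup u))\ge x(\delta(u))$ for each $u$, whereas you index directly by connected components $K_j$ of $G[T\setminus S]$ and use the cleaner subtraction $x(\delta(S))-x(\delta(T))=1$ together with $x(\delta(K_j))\ge 2$. Your bookkeeping is arguably tidier, but the underlying argument is the same.
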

\begin{proof}
	For each $u \in S$, define $P_u$ to be the set of vertices in $T \setminus S$ that are
	connected to $u$ using edges in $T$ but outside $S$. If $P_u \cap P_v \neq \emptyset$ for some $u \neq v$,
	then there is a path from $u$ to $v$ strictly contained in $T \setminus S$, and $P_u = P_v$.
	
	Suppose this doesn't happen. Then $P_u$ are disjoint for all $u$. Let
	\[ T^* = T \setminus \left( S \cup \bigcup_{u \in S} P_u \right) \]
	There are no edges between $T^*$ and $S$ by definition. We have the following:
	\[ x(\delta(T)) = x(\delta(S)) + x(\delta(T \setminus S)) - 2 x (e(S,T \setminus S)) \]
	$T^*$ along with $P_u$ form a partition of $T \setminus S$. Note that there are no edges
	between any of these parts by definition. Therefore,
	\[ x(\delta(T \setminus S)) = x(\delta(T^*)) + \sum_{u \in S} x(\delta(P_u)) \]
	and if $\set{u,v} \in e(S, T \setminus S)$ with $u \in S$, then $v \in P_u$ by definition.
	Therefore, 
	\[ x(e(S,T\setminus S)) = \sum_{u \in S} x(e(u,P_u)) \]
	Using these identities, we get
	\[ x(\delta(T)) = x(\delta(S)) + x(\delta(T^*)) + \sum_{u \in S}\big( x(\delta(P_u)) - 2 x(e(u,P_u))\big) 
	\labeleqn{eqn:delta_T_sum} \]
	Observe that
	\[ x(\delta(P_u \cup u)) = x(\delta(u)) + x(\delta(P_u)) - 2 x(e(u,P_u)) \]
	and therefore that 
	\begin{equation}
	\label{eqn:othereq}
	x(\delta(P_u)) - 2 x(e(u,P_u)) = x(\delta(P_u \cup u)) - x(\delta(u))
	\end{equation}
	Now, we claim that $x(\delta(P_u \cup u)) \ge x(\delta(u))$. We split each of the boundaries into
	two parts to get
	\[ x(\delta(u)) = x(e(u, X \setminus (P_u \cup u))) + x(e(u,P_u)) \]
	\[ x(\delta(P_u \cup u)) = x(e(u, X \setminus (P_u \cup u))) + x(e(P_u, X \setminus (P_u \cup u))) \] 
	Subtracting the equations, we get
	\[ x(\delta(P_u \cup u))) - x(\delta(u)) = x(e(P_u, X \setminus (P_u \cup u))) - x(e(P_u,u)) \]
	On the other hand,
	\[ x(e(P_u, X \setminus (P_u \cup u))) + x(e(P_u,u)) = x(\delta(P_u)) \ge 2 \]
	Now, the condition that $x(e(P_u,u)) \le x(e(u, X \setminus S)) \le 1$, it must be the case that
	$x(e(P_u, X \setminus (P_u \cup u))) \ge 1 \ge x(e(P_u,u))$. This implies that
	\[ x(\delta(P_u \cup u))) - x(\delta(u)) = x(e(P_u, X \setminus (P_u \cup u))) - x(e(P_u,u)) \ge 0 \]
	for all $u$. Substituting this into \eqref{eqn:delta_T_sum} (using \eqref{eqn:othereq}),
	\[ x(\delta(T)) \ge x(\delta(S)) + x(\delta(T^*)) \]
	which is clearly false, since $x(\delta(T)) < x(\delta(S))$ by assumption. This completes the proof
	of the lemma.
\end{proof}

\subsection{Construction of Half Integral Solution for the Gadget}
\label{ss:half_integral_solution_gadget}

\begin{figure}[t]
	\centering
	\includegraphics{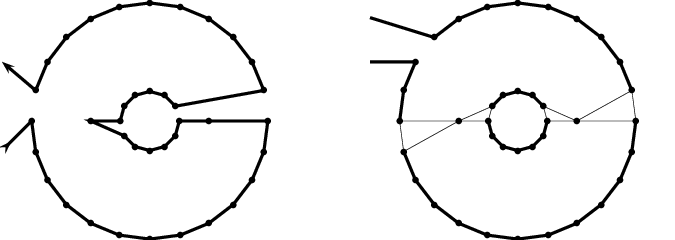}
	\caption{The gadget with the tour enters it once. Thick edges have weight $1$ and thin edges have weight $0.5$.}
    \label{fig:single_entry}
\end{figure}


We will now describe the construction of a local half-integral modification of the tour at an $(\varepsilon, D)$-copy of the gadget $S$, which is compatible with all comb inequalities of size $c$. For any Hamiltonian tour $P$, this modification can be made on any $(\varepsilon,D)$-copy of $S$ that has the following property with respect to $P$:

\begin{property}
	\label{a:adjecent_entry}
	We say that an $(\varepsilon, D)$-copy $S_1$ of $S$ has \Cref{a:adjecent_entry} with respect to a Hamiltonian path or tour $P$ if and only if 
	\begin{enumerate}[nosep]
        \item $P$ visits $S_1$ exactly once, and enters and leaves through consecutive vertices on the outer circle vertices.
		\item If $x,y$ are the points adjacent to $S_1$ in $P$, then the points $x,y$ are respectively connected to points of $S_1$ which are closest to them.
	\end{enumerate}
\end{property}

We will construct another gadget $\Pi^3_S(k)$ in \Cref{ss:expanding_the_gadget} that contains multiple copies of $S = S(k)$, such that given any optimal Hamiltonian tour $P$, at least one $(\varepsilon,D)$-copy of $S$ contained in an approximate copy of $\Pi^3_S$ must satisfy \Cref{a:adjecent_entry} with respect to $P$. 

The local half-integral solution mentioned above on $S$ (see \Cref{fig:single_entry}) consists of 
\begin{enumerate}[(i)]
    \item Four edge-disjoint triangles of edges of weight $\tfrac 1 2$---for each gap vertex, one such triangle joins that point to the closest two points on the outer and inner circles, respectively;
    \item Weight-1 edges joining the remaining consecutive pairs of points on the inner ring of the gadget;
    \item Weight-1 edges joining the remaining consecutive pairs of points on the outer ring of the gadget, except between entry/exit edges.
\end{enumerate}
Moreover, we require that the entry/exit edges are separated by at least $c-1$ points on the circle from the weight $\tfrac 1 2$ edges.
\Cref{fig:single_entry} shows the local solutions when $c = 2$ and $k = 12$, under \Cref{a:adjecent_entry}.
Now, we have the following lemma:

\begin{lemma}
	\label{l:s_k_bound}
	Consider gadget $S = S(k)$. Let $x,y$ be points outside $S$ and let $P$ be a Hamiltonian path $P$ from $x$ to $y$ in $\set{x,y} \cup S$ satisfies \Cref{a:adjecent_entry}. Then length of Hamiltonian path $P$ is at least 
	\[ \dist(x,S) + \dist(y,S) + 10 \pi + 8 - O\brac{\frac{1}{k}} \]
	On the other hand, cost of the half-integral solution described above is at most
	\[ \dist(x,S) + \dist(y,S) + 10 \pi + 6 + O\brac{\frac{c}{k}} \]
\end{lemma}
\begin{proof}
	Proof of the first lower bound is given in \cite{frieze2015separating}.
	We include a discussion about the lower bound in \Cref{ss:proof_s_k_bound} for sake of completeness.

	For the second bound, observe that in the half-integral solution, the total length of half-integral edges is $12 + 8\pi \frac{4}{2k} + 2\pi \frac{4}{k}$.
	On the other hand, the total length of integral edges contained in $S$ is $10 \pi - 8 \pi \frac{3}{2k} - 2 \pi \frac{2}{k}$, since we are missing $3$ edges on bigger circle and $2$ edges on smaller circle.
	Further, length of entry and exit segments is at most $\dist(x,S) + \dist(y,S) + 2 \frac{8\pi c}{2k}$ since these are the original entry / exit points moved by length at most $c$ points. Therefore, we get the total length of at most
	\[ \dist(x,S) + \dist(y,S) + \frac{8 \pi c}{k} + 10 \pi - \frac{16}{k} + 6 + \frac{12}{k} \]
	which gives the required bound.
\end{proof}

\begin{corollary}
	\label{c:gadget_separation}
	There exists a constant $\gamma$ such that if $k = \gamma c$ and $S = S(k)$, then for any points points $x,y$, and an Hamiltonian path from $x$ to $y$ on $S \cup \set{x,y}$ such that $S$ satisfies \Cref{a:adjecent_entry} with respect to $P$, the half-integral solution described above has total value at least $1$ smaller than length of $P$.
\end{corollary}
In particular, $\gamma = 16\pi$ ensures that for $c \ge 3$, $k \ge 48\pi$, the total cost of the half integral solution is at most $L + 6.5$. Following the computations in \Cref{ss:proof_s_k_bound}, total length of any Hamiltonian path $P$ from $x$ to $y$ on $S_1$ is at least $L + 7.5$, which implies the result.

\subsubsection{Satisfying combs with 3 teeth}
Now we prove that the half-integral solution described above in the gadget $S$ satisfies all $3$-combs of size at most $c$, assuming $c < k$. Note that we wil eventually pick $k = \gamma c$ where $\gamma > 2$, and hence this condition is always satisfied.

\begin{lemma}
	\label{lemma:large_3_comb}
	If a gap vertex is contained in a $3$-comb such that the comb inequality corresponding to the
	$3$-comb is violated, then $\abs{H \cup T_1 \cup T_2 \cup T_3} \ge 2c$.
\end{lemma}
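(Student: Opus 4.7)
The plan is to fix a violating $3$-comb $C = H \cup T_1 \cup T_2 \cup T_3$ containing the gap vertex $g$, and to show that the rigid structural constraints of \Cref{lemma:invalid_comb_struct} force $C$ to span at least $c$ consecutive points along two disjoint arcs of the gadget.

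First I would observe that $g$ is incident only to the four weight-$\tfrac 1 2$ edges of its two triangles $(g,o_1,i_1)$ and $(g,o_2,i_2)$, where $o_j$ lies on the outer ring and $i_j$ on the inner ring. Since $x(\delta^*(H))=0$, any edge at $g$ crossing the handle must lie inside some $e(A_i,B_i)$. By the triangle-decomposition property exploited in the proof of \Cref{lemma:invalid_comb_struct}, if a weight-$\tfrac 1 2$ edge of a triangle crosses the handle boundary then the entire triangle must be contained in the corresponding tooth, for otherwise its third vertex would contribute to $x(\delta^*(H))$. A short case analysis on whether $g \in H$, $g \in A_i$ for some $i$, or $g \in B_i$ for some $i$ shows that in every case at least one neighbor of $g$ on the outer ring and at least one neighbor on the inner ring must lie in $C$.

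Starting from a forced outer-ring point $o_j \in C$, I would walk along the outer ring via its weight-$1$ edges. Each such edge either lies entirely inside a single piece of $C$ or contributes $1$ to one of the $x(\delta)$-budgets ($3$ for $H$, $2$ for each $T_i$). The two triangles at $g$ have already consumed part of this budget, so the comb cannot close immediately; it must either reach an entry/exit edge of the tour through the gadget, which by construction lies at least $c-1$ points away from the nearest weight-$\tfrac 1 2$ triangle, or it must continue until it reaches the triangles at the other gap vertex. Either way, the arc of the outer ring contained in $C$ starting at $o_j$ has at least $c$ points. Repeating the identical argument on the inner ring starting from $i_j$ yields a second arc of at least $c$ points, and since these arcs lie on disjoint rings we conclude $\abs{H \cup T_1 \cup T_2 \cup T_3} \ge 2c$.

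The hard part will be the case-analysis bookkeeping: there are several ways in which the triangles at $g$ may be distributed between the handle and the teeth (and several ways $g$ itself may sit relative to them), and for each configuration one must carefully account for how much of the $x(\delta)$-budget on the handle and teeth has already been spent by those triangles before entering the ring, so as to rule out the comb closing within fewer than $c$ ring steps. A secondary subtlety will be confirming that the two forced arcs do not double-count any vertex, which is immediate because they lie on disjoint inner and outer rings.
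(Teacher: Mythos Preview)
Your approach diverges from the paper's in a fundamental way, and as written it has two concrete problems.

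First, a factual slip about the gadget: the two triangles at the gap vertex $g$ are $(g,o_1,o_2)$ with both $o_j$ on the \emph{outer} ring and $(g,i_1,i_2)$ with both $i_j$ on the \emph{inner} ring, not mixed triangles $(g,o_j,i_j)$ as you wrote. This matters because the inner and outer rings are only linked through $g$ itself, which changes the geometry of any walk-based argument.

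Second, and more seriously, the paper does not argue via two separate arcs of length $\ge c$ on the two rings. Instead it invokes \Cref{lemma:cycle_lemma}: whenever a set $S$ (a triangle, a pair of triangle vertices, or the union $Q$ of both triangles) sits inside a tooth or the handle $T$ with $x(\delta(T)) < x(\delta(S))$ and each vertex of $S$ sends weight $\le 1$ outward, the lemma produces a path in $T\setminus S$ between two vertices of $S$. In the gadget, any such path must either traverse an entire ring, or reach an entry/exit edge and re-enter, or pass through the other gap vertex; in every case its length is at least $2c$. The case split is then simply on how many vertices of a triangle lie in a single tooth (three, two, or at most one), with the last case forcing $Q\subseteq H$ and $x(\delta(Q))=4>3=x(\delta(H))$.

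Your budget-walk heuristic is not obviously a substitute for this. The assertion that the walk ``cannot close immediately'' because the triangles have ``already consumed part of the budget'' is where the argument would have to do real work, and it is not clear it succeeds: with three teeth one can place short teeth at three of the four ring-directions emanating from $Q$, so three of your four walks exit $C$ after only $O(1)$ steps. Only one direction is then forced to be long, which gives a single arc of length $\ge c$, not two. To recover $2c$ from that configuration you would still need to show the long arc actually runs to the other gap triangle or wraps an entire ring (length $\ge k$ or $\ge 2c$ via an exit/re-entry), which is exactly the content of \Cref{lemma:cycle_lemma}. So either your case analysis will end up reproving that lemma in situ, or the ``two disjoint arcs of length $\ge c$'' claim is simply false for some violating $3$-combs.
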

\begin{proof}
	The gap vertex is contained in two triangles of weight-$\tfrac 1 2$ edges.

	\case If any triangle $P$ is contained in some tooth $T$, then note that $x(\delta(P)) = 3$ and $x(\delta(T)) = 2$. Further, each vertex of triangle has exactly $1$ weight going outside the triangle. Therefore, by \Cref{lemma:cycle_lemma}, $T$ contains a path between two vertices of the triangle that lies completely outside the triangle.
	Any such path either must go along entire inner circle, or entire outer circle or it exits the gadget and enters again. In first two cases, this path as length at least $2k$ and in second case, the path has length at least $2c$, implying that $\abs{T} \ge 2c$.

	\case If some tooth contains exactly two vertices of one of the two triangles, then since it doesn't contain the third vertex, all the conditions of \Cref{lemma:cycle_lemma} are satisfied. This again implies that $T$ contains a path between the two vertices that does not use any edges in the triangle, and hence must have size at least $2c$.

	Therefore, a tooth can contain at most one vertex of the triangles that contain the gap vertex. Hence, the handle must contain the gap vertex, and both the triangles containing the gap vertex. Let $Q$ denote the union of both the triangles. Then $x(\delta(Q)) = 4$, and every vertex has edges of weight exactly $1$ going out of $Q$. By \Cref{l:invalid_comb_struct} and the fact that this is a 3-comb, we have $x(\delta(H))=3$. 
	Thus, it satisfies the conditions of \Cref{lemma:cycle_lemma}, and hence, $H \setminus Q$ contains a path between two vertices in $Q$ which lies completely outside $Q$. This path must have length at least $2c$ by exactly the same argument as above.

	Hence, in all cases, either a tooth or the handle must contain at least $c$ vertices, proving the lemma that we want.
\end{proof}

Since all the half weight edges in the Gadget are contain at least one gap vertex, every $3$-comb must
have a gap vertex in it, and hence must have large size.

\subsubsection{Satisfying combs with 5 or more teeth}
\begin{lemma}
	\label{lemma:large_5_comb}
	If a gap vertex is contained in a $t$-comb, with $t \ge 5$, then 
	\[ \abs{H \cup \bigcup_{i=1}^t T_i} \ge c \]
\end{lemma}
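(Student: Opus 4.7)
The plan is to mimic the case analysis in the proof of \Cref{lemma:large_3_comb}. Let $P_1, P_2$ denote the two weight-$\tfrac 1 2$ triangles of the gadget containing $v_g$, and split on how they interact with the comb. If some triangle $P_j$ has at least two of its vertices in a single tooth $T$, then the application of \Cref{lemma:cycle_lemma} used in Cases~1 and~2 of \Cref{lemma:large_3_comb} applies verbatim, yielding $\abs{T} \ge 2c$, and hence $\abs*{H \cup \bigcup_{i=1}^t T_i} \ge c$.

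Otherwise, every tooth contains at most one vertex from each of $P_1, P_2$. I first argue that $v_g \in H$ and lies in no tooth: if $v_g \in T_i$, the case hypothesis rules out any other triangle-vertex in $T_i$, so (since $v_g$'s only positive-weight edges are its weight-$\tfrac 1 2$ triangle edges) the connectivity of $T_i$ forces $T_i = \set{v_g}$, violating $x(e(A_i, B_i)) = 1$ from \Cref{lemma:invalid_comb_struct}. Next, each of the four triangle-vertices adjacent to $v_g$ must itself lie in $H$: otherwise, a weight-$\tfrac 1 2$ edge from $v_g \in H$ to such a vertex outside $H$ is not captured by any $e(A_j, B_j)$ (since $v_g \notin T_j$ for any $j$), and hence contributes to $x(\delta^*(H)) = 0$, a contradiction. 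Thus both triangles lie entirely in $H$ and contribute nothing to $x(\delta(H)) = t \ge 5$, which must therefore be supplied entirely by weight-$1$ edges.

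Now suppose for contradiction that $\abs*{H \cup \bigcup_{i=1}^t T_i} < c$. By the gadget construction the entry/exit edges are at least $c-1$ ring positions from the weight-$\tfrac 1 2$ triangles, and the opposite gap vertex is at distance at least $k > c$ from $v_g$ along each ring, so the comb contains no entry/exit vertex and nothing near the other gap vertex. Since $H$ is connected by \Cref{lemma:connected_handle} and its only non-ring connections are the triangle edges at $v_g$, the intersection of $H$ with each ring consists of at most two intervals, one anchored at each of the two triangle-vertices of $v_g$ on that ring; any additional component would be disconnected from the rest of $H$. Each such interval contributes at most one weight-$1$ ring-edge crossing to $x(\delta(H))$, giving $x(\delta(H)) \le 4$ and contradicting $x(\delta(H)) = t \ge 5$. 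The delicate step is the structural analysis of the second case: using $x(\delta^*(H)) = 0$ to force both triangles at $v_g$ into $H$, and then using the connectivity of $H$ together with the gadget's geometry (isolated triangles, entry/exit separated from triangles by at least $c-1$) to cap the number of ring-edge crossings at four.
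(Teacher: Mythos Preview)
Your argument is correct and takes a genuinely different route from the paper's. Both proofs begin the same way: reduce to a minimal violated comb so that \Cref{lemma:connected_handle} applies (you should state this reduction explicitly, since you invoke that lemma), handle the case where a tooth contains two vertices of a gap triangle via \Cref{lemma:cycle_lemma}, and conclude that $v_g$ lies in $H$ and not in any tooth. From there the two proofs diverge.

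The paper argues combinatorially on the \emph{teeth}: after deleting the weight-$\tfrac12$ edges, the gadget decomposes into four weight-$1$ paths, so by pigeonhole two teeth $T_1,T_2$ lie on the same path $P$; a path in $G[H]$ from $v_g$ to $T_2\cap H$ then either passes through all of $T_1$ (contradicting $B_1\neq\emptyset$) or wraps around an entire ring, forcing $|H|\ge k$.

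You instead argue directly on the \emph{handle}: using $x(\delta^*(H))=0$ and $v_g\notin\bigcup T_i$ you force all four triangle vertices at $v_g$ into $H$, and then observe that a connected $H$ of size $<c$ intersects each ring in a single arc through the triangle pair, so the weight-$1$ ring edges contribute at most $2$ per ring to $x(\delta(H))$, giving $x(\delta(H))\le 4<t$. This is shorter and uses the numerical constraint $x(\delta(H))=t\ge 5$ more directly than the paper's pigeonhole step. Two small points of phrasing: on each ring the two triangle vertices are adjacent, so what you describe as ``two intervals'' is really one interval with two boundary edges (the count of $4$ is unaffected); and your claim that the comb cannot reach the opposite gap vertex along a ring uses that $k$ is large relative to $c$ (the construction allows this, but it is worth saying).
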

\begin{proof}
	For this case, we will assume that the given comb in minimal, in particular, we have a comb with minimum value of $t$. If not, then we can always show the result for a smaller comb contained in this comb. From \Cref{lemma:connected_handle}, if $H$ is not connected, then we can always find a smaller comb that invalidates the solution. Hence, we will assume that $H$ is connected for rest of the proof.

	First, note that case $(1)$ of \Cref{lemma:large_3_comb} does not use the assumption that the comb has only $3$ teeth. Therefore, we can conclude that teeth of comb of any size cannot contain the gap vertex. Hence, we only need to handle the case that the handle of the comb contains a gap vertex.

	Note that any edge leaving the gadget is at least $c$ distance away from the gap vertex. Since the comb, that is $H \cup \bigcup_{i=1}^t T_i$ is connected, if the comb contains any vertex outside the gadget, then it must have at least $c$ vertices. Thus, we can assume that the comb is completely inside the gadget.

	Further, any tooth can't have an edge of weight $1/2$, since that would mean it contains two vertices of a gap triangle, and then by case $(1)$ of \Cref{lemma:large_3_comb} the tooth must contains a large cycle. Hence, all the edges strictly inside a tooth have weight exactly $1$. Hence, each tooth is completely contained inside one of the $4$ paths left after deleting all the edges of weight $1/2$ in the gadget. Note that there are only $4$ paths and at least $5$ teeth. Let $L_1, L_2, L_3, L_4$ be the $4$ paths.

	Hence, one of the paths, say $P$, contains at least $2$ teeth. Let these be $T_1, T_2$ such that the closest point in $T_1$ is closer to the gap vertex than the closest point in $T_2$. Now, since $T_1, T_2$ are connected, this implies that every vertex in $T_1$ is closer to the gap vertex than every vertex in $T_2$. But, since $T_2$ intersects the handle, there is a path from the gap vertex to $T_2$, say $Q$, which is completely contained in the handle. If $Q$ is completely contained in $P$, then it contains entire $T_1$ implying that $T_1$ is contained in the handle, which contradicts definition of the comb. Otherwise, the path $Q$ must wrap around using one of the other paths, $L_i \neq P$. Since it must contain the whole path, that implies that the handle has size at least $k$. This completes the proof.
\end{proof}

\subsection{Expanding the gadget}
\label{ss:expanding_the_gadget}

\begin{figure}[t]
	\centering
	\includegraphics{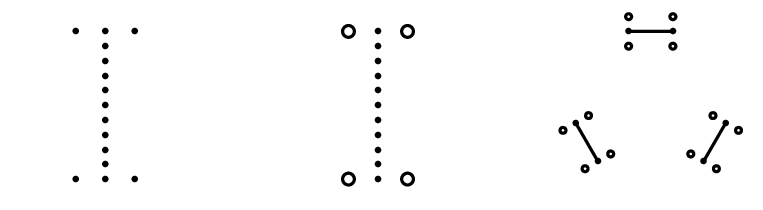}
	\caption{The figure shows (a)$\Pi(t,h,w)$, (b)$\Pi(S,t,h,w)$ and (c)$\Delta(S,D)$ from left to right.}
	\label{fig:pi_3_s}
\end{figure}

We will now describe the gadget $\Pi^3_S = \Pi^3_S(k)$ that contains $12$ copies of $S(k)$, such that there is an almost optimal Hamiltonian tour that satisfies \Cref{a:adjecent_entry} in at least one copy of $S$ in all $(\varepsilon,D)$ copies of $\Pi^3_S$. This gadget is obtained by combing two more gadgets with $S$, namely $\Pi_S = \Pi(S,t,h,w)$ and $\Pi^3_S = \Delta(\Pi_S,D)$. \Cref{fig:pi_3_s} illustrates these gadgets. We will provide the formal definitions and state few lemmas below, proofs of which are given in \Cref{s:gadget_proofs}.
\begin{restatable}[]{definition}{dPiTHW}
	\label{d:pi_t_h_w}
	We define the gadget $\Pi(t,h,w)$ for $t \in \Z_{\ge 0}$ and $h,w \in \R_{\ge 0}$, given by points $\pi_1 = \brac{-\frac{w}{2},0}$, $\pi_2 = \brac{\frac{w}{2},0}$, $\pi_3 = \brac{-\frac{w}{2},h}$, $\pi_4 = \brac{\frac{w}{2},h}$ and points $v_1, \ldots, v_t$ which are evenly spaced along $(0,0), (0,h)$, with $v_1 = \brac{0,0}$ and $v_t = \brac{0,h}$. 
	We will refer to sets $\set{\pi_1 \pi_2}$ and $\set{\pi_3 \pi_4}$ as \emph{shorter sides} of the gadget, and sets $\set{\pi_1 \pi_3}$ and $\set{\pi_2 \pi_4}$ as \emph{longer sides} of the gadget.
\end{restatable}
\begin{restatable}[]{lemma}{lPiNiceCorner}
	\label{l:pi_nice_corner}
	Let $p,q$ be two points on the opposite sides of the horizontal line $y=\tfrac h 2$ such that 
	\[ \dist(\set{x,y}, \Pi(t,h,w)) \ge D \]
	Let $P$ be a shortest Hamiltonian path from $p$ to $q$ in $\Pi(t,h,w) \cup \set{p,q}$. If all of the following inequalities hold,
	\[
		D \ge \tfrac{h^2 + w^2}{4w} \qquad
		h \ge 2w \qquad
		t \ge \tfrac{16h}{w}
	\]

	Then for at least two $i \in {1,2,3,4}$ we have that neither neighbor $v_i^1, v_i^2$ of $\pi_i$ on $P$ is not in $\set{p,q}$ and moreover, $v_i^1$, $v_i^2$ are two points in $\set{v_1, \ldots, v_t}$ closest to $\pi_i$.
\end{restatable}

Intuitively, this lemma holds since the shortest path through $\Pi(t,h,w)$ must travel through both the shorter sides, and connect them using the middle segment.
The condition on positions of $p,q$ ensures that it is beneficial to enter the gadget on one of the shorter sides and exit from the other shorter side. 
A formal proof is given in \Cref{ss:proof_pi_nice_corner}.

Now, we extend this gadget to the gadget $\Pi_S$, which is constructed by replacing each of the four corner points in $C = \set{\pi_1, \pi_2, \pi_3, \pi_4}$ by a copy of gadget $S(k)$ defined in \Cref{d:s_k}. 

\begin{restatable}[]{definition}{dPiS}
	\label{d:pi_s}
	We construct the gadget $\Pi(S(k),t,h,w)$ by replacing points in $C$ by copies of $S(k)$ centered at each point $\pi_i \in C$. We let $S_i$ denote the copy centered at $\pi_i$.
\end{restatable}


\begin{restatable}[]{lemma}{lPiSNiceCorner}
	\label{l:pi_s_nice_corner}
	Let $p,q$ be two points on the opposite sides of the line $y = \frac{h}{2}$ such that \
	\[ dist(\set{p,q}, \Pi(t,h,w)) \ge D. \]
	Let $P$ be a shortest Hamiltonian path from $p$ to $q$ in $\Pi(S(k),t,h,w) \cup \set{p,q}$. If all of the following inequalities hold,
	\[
		D \ge \tfrac{h^2 + w^2}{4w} \qquad
		h \ge 2w \qquad
		w \ge 100 \qquad
		t \ge 2h \qquad
		\frac{h}{t} \le \frac{4\pi}{k}
	\]

	Then there is a Hamiltonian path $Q$ from $p$ to $q$ in $\Pi(S(k),t,h,w) \cup \set{p,q}$ such that $Q$ visits each $S_i$ at most once, $\ell (Q) \le \ell (P) + O(1/k)$ and for at least two $i \in {1,2,3,4}$ we have that neither neighbor $v_i^1, v_i^2$ of $S_i$ on $Q$ is not in $\set{p,q}$ and moreover, $v_i^1$, $v_i^2$ are two points in $\set{v_1, \ldots, v_t}$ closest to $S_i$.
\end{restatable}

 In particular, $\Pi_S(k) = \Pi(S(k), \frac{200k}{4\pi}, 200, 100)$ satisfies this lemma for $D = 125$. A complete proof of \Cref{l:pi_s_nice_corner} is given in \Cref{ss:proof_pi_s_nice_corner}.
 
 Now, we introduce the final piece of the puzzle, the gadget $\Delta(D, \Pi_S(k))$ which contains three copies of the gadget $\Pi_S(k)$. This gadget is designed in a way that at least one of the copy of $\Pi_S(k)$ is visited exactly once in any optimal tour.

\begin{restatable}[]{definition}{dDelta}
	\label{d:delta}
	For any gadget $T \in \R^2$ with diameter $d$ and $D \in \R_{\ge 0}$, we define the gadget $\Delta(D,T)$ containing three copies of $T$, $T_1, T_2, T_3$ centered at points $C_1 = \brac{R, \tfrac{\pi}{2}}$, $C_2=\brac{R, \tfrac{7\pi}{6}}$ and $C_3=\brac{R,\tfrac{11\pi}{6}}$ for $R = \tfrac{D+2d}{\sqrt{3}}$ and rotated clockwise in angles of $\tfrac{\pi}{2}$, $-\tfrac{\pi}{6}$, and $\tfrac{\pi}{6}$ respectively.
\end{restatable}

An illustration for $\Delta(D,\Pi_S)$ is given in \Cref{fig:pi_3_s}. The rotations are to ensure that gadgets are symmetrically situated around rays along $OC_1$, $OC_2$ and $OC_3$ respectively. Further, distance between $T_i$ and $T_j$ is at least $D$ for any $i \neq j$. Now, we have following lemma about this gadget:

\begin{restatable}[]{lemma}{lDelta}
	\label{l:delta_single_visit}
	Let $\varepsilon > 0$ be positive real. Then there exists constants $D_1, D_2 \ge 0$ such that if $P$ is an optimal Hamiltonian tour over $V$, and if $\Delta_1$ is any $(\varepsilon,D_2)$ copy of $\Delta(D_1, \Pi_S(k))$, then there exists an $i \in \set{1,2,3}$ such that $P$ visits $\Pi_i$ exactly once, where $\Pi_1, \Pi_2, \Pi_3$ are $(\varepsilon, D_1)$-copies of $\Pi_S(k)$ contained in $\Delta_1$, with centers $C_1, C_2, C_3$ respectively. Further if $p,q$ are neighbors of $T_i$ in $P$, then $p,q$ lie on the opposite side of $\overleftrightarrow{OC_i}$, where $O$ is the center of $\Delta_1$. In particular, the values
	\begin{equation}
		\label{eq:delta_single_visit}
		D_1 = \frac{2000}{1 - \cos \frac{\pi}{10}} \qquad \text{and} \qquad D_2 = \frac{30000}{\brac{1 - \cos \frac{\pi}{10}}^2}
	\end{equation}
	suffice.
\end{restatable}

Proof of this lemma is a repeated application of \Cref{obs:gadget_entry_bound}, in particular, the condition that if an $(\varepsilon,D)$-copy of a gadget $T$ is visited multiple times, then the entry and exit rays must be parallel. The details are given in \Cref{ss:proof_delta_single_visit}.

We define $\Pi^3_S = \Pi^3_S(k) = \Delta(D_1, \Pi_s(k))$ where $D_1 = \tfrac{2000}{1 - \cos (\pi/10)}$ is as defined in \Cref{l:delta_single_visit}. Combining \Cref{l:pi_s_nice_corner,l:delta_single_visit}, we get the following lemma which shows the existence of $(\varepsilon,D)$-copies of $S$ which have \Cref{a:adjecent_entry}.

\begin{lemma}
	\label{l:nice_s_existance}
	For any $k \ge 4$ and $\varepsilon > 0$, there is are constants $D_1, D_2 > 0$ such that any $(\varepsilon,D_2)$-copy $\Pi^3_1$ of $\Pi^3_S(k)$ contained in $V$, and for any optimal Hamiltonian tour $P$ on $V$, there is a Hamiltonian tour $Q$ such that $\ell (Q) \le \ell (P) + 40 k \varepsilon$ and $(\varepsilon, D_1)$-copy $S_1$ of $S(k)$ such that $S_1 \subseteq \Pi_1$ and $S_1$ has \Cref{a:adjecent_entry} with respect to path $Q$.

	Further, if $k \ge \gamma c$ where $\gamma$ is given by \Cref{c:gadget_separation} and if $\bar{y}$ is the half-integral solution described in \Cref{ss:half_integral_solution_gadget}, then replacing $Q$ by $S_1$ gives an half-solution satisfying $\comb_c$ inequalities. This replacement can be made in all disjoint $(\varepsilon,D_2)$-copies of $\Pi^3_S(k)$ in $V$ simultaneously.
\end{lemma}
\begin{proof}
	We pick $D_1, D_2$ as defined in \Cref{l:delta_single_visit}, namely
	\[ D_1 = \frac{2000}{1 - \cos \frac{\pi}{10}} \qquad \text{and} \qquad D_2 = \frac{30000}{\brac{1 - \cos \frac{\pi}{10}}^2} \]
	Recall that $\Pi_S(k) = \Pi(S(k), \tfrac{200k}{4\pi}, 200, 100)$. It follows from \Cref{d:s_k,d:pi_s,d:delta} that $\Pi_S(k)$ has at most $40k$ points for $k \ge 4$.
	Since $\Pi^3_1$ is an $(\varepsilon,D_2)$-copy of $\Pi^3_S(k)$, there exists a translation $T_1$ of $\Pi^3_S(k)$ and a bijection $f:T_1 \to \Pi^3_1$ such that $\norm{x-f(x)} \le \varepsilon$.

	Using \Cref{l:delta_single_visit}, there is an $(\varepsilon,D_1)$-copy $\Pi_1$ of $\Pi_S(k)$ that is visited by $P$ exactly once. Further, if $p$ and $q$ are the points adjacent to $\Pi_1$ in $P$, then $p,q$ are on opposite side of $OC_1$ where $O$ is center of $T_1$ and $C_1$ is the center of $T_2 = f^{-1}(\Pi_1)$.
	Let $\bar{P}$ be an optimal Hamiltonian path from $p$ to $q$ in $T_1 \cup \set{p,q}$. Then by \Cref{l:pi_s_nice_corner}, there is an Hamiltonian path $\bar{Q}$ from $p$ to $q$ in $T_1 \cup \set{p,q}$ such that $\ell (\bar{Q}) \le \ell (\bar{P}) + O(1/k)$ and a copy $T_2$ of $S(k)$ such that $S(k)$ has \Cref{a:adjecent_entry} with respect to $\bar{Q}$.
	Let $Q$ be the Hamiltonian tour that equals $P$ outside $f(T_1) \cup \set{p,q}$ and $f(\bar{Q})$ inside $f(T_1) \cup \set{p,q}$. Then $f(T_2)$ has \Cref{a:adjecent_entry} with respect to $Q$. Since $\bar{P}$ was optimal on $T_1 \cup \set{p,q}$ and $T_1$ has at most $40k$ points, we must have $\ell (\bar{P}) \le \ell (P \cap (T_1 \cup \set{x,y})) + 40k\varepsilon$. It follows that $\ell (Q) \le \ell (P) + 40 k \varepsilon + O(1/k)$.
	
	Using \Cref{c:gadget_separation} for $k = \gamma c$, and $\varepsilon = O(\tfrac{1}{k})$, we can ensure that the cost of half-integral solution $\bar{y}$ is at least $1$ smaller than length of $Q$, and at least $0.5$ smaller than length of $P$. Further, \Cref{ss:half_integral_solution_gadget} implies that we can make this replacement in a single gadget without violating $\comb_c$ inequalities.

	If we do the replacement simultaneously in multiple disjoint $(\varepsilon,D_2)$-copies of $\Pi^3_S(k)$, then any comb of size at most $c$ containing an half-integral edge must be completely contained in an $(\varepsilon,0)$-copy $S(k)$. And hence again by \Cref{ss:half_integral_solution_gadget}, we do not violate any $\comb_c$ inequalities.
\end{proof}

Now, we are in a position to complete proof of \Cref{thm:comb_sep}. Recall that $\tsp(\mathcal{Y}_n) = \beta_\tsp^d n$ a.s. (since $\mathcal{Y}_n$ are random points in $[0,t]^d$). Further, by \Cref{obs:smiley_face_lemma}, we can find $C_{\ep,D_2}^\Pi n$ disjoint $(\varepsilon,D_2)$-copies of $\Pi^3_S(k)$ in $\mathcal{Y}_n$ w.h.p., for $k = \gamma c$, $\varepsilon = O(1/k)$ and $D_2 = O(1)$, for some constant $C_{\ep,D_2}^\Pi$. (To be more precise, we can choose $C_{\ep, D_2}^\Pi = \exp\brac{-O(c \log c)})$.

Using \Cref{l:nice_s_existance}, given any optimal tour in $\mathcal{Y}_n$, we can find a half-integral solution $\bar{y}$ on the edges of $\mathcal{Y}_n$ which is at least $\tfrac{1}{2} C_{\ep,D_2}^\Pi n$ smaller than the length of optimal tour.
Therefore, 
\[ \comb_c\brac{\mathcal{Y}_n} \le \tsp\brac{\mathcal{Y}_n} - K n = (\beta_\tsp^2 - K)n \]
a.s. for some constant $K > 0$, implying that $\gamma_\comb^2 < \beta_\tsp^2 - K$.
This completes proof of \textbf{\Cref{thm:comb_sep}} in two dimensions.

\subsection{Higher Dimensions}
Note that the construction above only works for $d = 2$. For higher dimensions, we construct gadget $T_d(k)$ which contains $5$ copies of $\Pi^3_S(k)$ which are at least $D_2$ distance apart from each other and lie in the same $2$-dimensional plane, where $D_2$ is as defined in \Cref{l:nice_s_existance}.

Since an optimal tour can enter $T_d(k)$ at most twice (\Cref{l:two_exits}), at least one of the $5$ copies of $\Pi^3_S(k)$ must only be connected to points in $T_d(k)$. This reduces problem to two dimensional case, and we can then use \Cref{l:nice_s_existance} to conclude higher dimensional version of \Cref{l:nice_s_existance}!

\begin{lemma}
	For any $k \ge 4$ and $\varepsilon > 0$, there is are constants $D_1, D_2 > 0$ such that any $(\varepsilon,D_2)$-copy $T_d$ of $T_d(k)$ contained in $V$, and for any optimal Hamiltonian tour $P$ on $V$, there is a Hamiltonian tour $Q$ such that $l(Q) \le l(P) + 200 k \varepsilon$ and $(\varepsilon, D_1)$-copy $S_1$ of $S(k)$ such that $S_1 \subseteq \Pi_1$ and $S_1$ has \Cref{a:adjecent_entry} with respect to path $Q$.

	Further, if $k \ge \gamma c$ where $\gamma$ is given by \Cref{c:gadget_separation} and if $\bar{y}$ is the half-integral solution described in \Cref{ss:half_integral_solution_gadget}, then replacing $Q$ by $S_1$ gives an half-solution satisfying $\comb_c$ inequalities. This replacement can be made in all disjoint $(\varepsilon,D_2)$-copies of $T_d(k)$ in $V$ simultaneously.
\end{lemma}

In particular, similar to argument in $2$ dimensional case, this gives us the separation in higher dimensions, namely
\[ \gamma_\comb^d \le \beta^d_\tsp - K \]
for some constant $K$.

\section{Branch and Bound Algorithms}
\label{sec:bnb_algo_bound}
In this section, we will prove \Cref{thm:bnb_algo_bound}.
For this section, we will assume that we are working in some fixed dimension $d$. Further, throughout this section, $O$ notation will hide constants dependent on $d$.

As considered here, a branch and bound algorithm depends on three choices:
\begin{enumerate}[label=(\arabic*),nosep]
	\item A choice of heuristic to find (not always optimal) TSP tours.
	\item A choice of lower bound for TSP (such as $\comb_c$ or $\hk$).
	\item A branching strategy.
\end{enumerate}
The result of a branch-and-bound approach is a branch-and-bound tree, which is a rooted tree such that to each vertex $v$ of this tree, we associate two sets $I_v$ and $O_v$ such that 
\begin{enumerate}[label=(\arabic*),nosep]
	\item When $v$ is the child of $u$, $I_v \supseteq I_u$ and $O_v \supseteq O_u$
	\item If $u$ has children $v_1, \ldots, v_k$, then we have $\Lambda_u= \bigcup_{i=1}^k \Lambda_{v_i}$, where $\Lambda_u$ denotes the set of TSP tours which  include all the edges in $I_u$ and exclude all the edges in $O_u$.
	\item The leaves of the (unpruned) branch and bound tree satisfy $|\Lambda_v| = 1$.
\end{enumerate}
For any node $v$ of the branching tree, let $b_v$ denote the value of the lower bound, which in our case is the value of $\comb_c$ under the additional constraints given by $I_v$ and $O_v$ (that is, the solution for $\comb_c$ must include all the edges in $I_v$ with weight $1$ and must exclude all the edges in $O_v$).
Let $B$ be the value of the tour given by our heuristic. For each vertex $v$, we find a tour using the some heuristic that includes the edges in $I_v$ and excludes the edges in $O_v$, and whenever we find a tour smaller than $B$, we update $B$. For every vertex $v$ such that $b_v \ge B$, we know that we have already found a tour as good as any in $\Lambda_v$, and we prune the tree at $v$.
The process ends when the set $L$ of leaves of the pruned tree satisfies $v \in L \Rightarrow b_v \ge B$. Note that such a tree in fact gives a proof that $B$ is an optimal tour.

Note that following any branching strategy to generate the tree will give us an optimal tour and proof of its optimality. For a branch and bound to be efficient, we want to prune the tree such that only polynomially many leaves remain.

We can now state a more precise version of \Cref{thm:bnb_algo_bound} as follows:
\begin{restatable}[\Cref{thm:bnb_algo_bound} restated]{theorem}{tbnb}
	\label{thm:bnb_restated}
	For any TSP heuristic, any branching strategy and a lower bound heuristic which is $\comb_c$ for some constant $c$, the pruned branch and bound tree will have $e^{\Omega(n/\log^5 n)}$ leaves w.h.p.
\end{restatable}

Further, we state a generalization of above result when $c$ is not a constant as follows:

\begin{restatable}[\Cref{thm:bnb_algo_bound_general} restated]{theorem}{tbnbg}
	\label{thm:bnb_restated_general}
	Given any $\varepsilon > 0$, For any TSP heuristic, any branching strategy and a lower bound heuristic which is $\comb_c$ for $c = O\brac{\tfrac{\varepsilon \log n}{\log \log n}}$, the pruned branch and bound tree will have $e^{\Omega(n^{1 - 6\varepsilon})}$ leaves w.h.p.
\end{restatable}
\noindent Note that setting $\varepsilon = 0.08$ gives us \Cref{thm:bnb_algo_bound_general}

Any branch-and-bound approach should produce not only an optimal tour, but, via the pruned tree and computed bounds, a certificate verifying that the returned tour is optimal.  \Cref{thm:bnb_restated} shows that even just the size of this certificate is exponential.
Our general strategy to prove \Cref{thm:bnb_restated} will be to show that when $\comb_c(\XXX_n \mid I_v, O_v) \ge \tsp(\XXX_n)$ then either $I_v$ or $O_v$ is must be large, and hence $\Lambda_v$ is in fact small. 
Since $\Lambda = \bigcup \Lambda_v$, this would imply that there are a lot of leaves in any pruned tree.

Following \cite{frieze2015separating}, we will further modify this approach by looking at a special set of tours $\overline{\Lambda}$. Given the point set $\mathcal{X}_n$, we will consider the division of $[0,1]^d$ into $s = \frac{n}{\sigma}$ boxes of side-length $s^{-\frac{1}{d}}$. We will eventually $\sigma = \Omega(\log n)$ as required for the runtime bounds.
Let $B_1, \ldots, B_s$ denote these boxes, taken in some order such that consecutive terms share a $(d-1)$ dimensional face. Note that
\[ \abs{x - y} \le \sqrt{d} \cdot s^{-\frac{1}{d}} = O\brac{s^{-\frac{1}{d}}} \]
if $x,y$ lie in the same box. We consider $\XXX_n = \set{x_1, \ldots, x_n}$, and for each $2 \le j \le s-1$, we let $x_j^1, x_j^2, x_j^3, x_j^4$ denote the four points $x_i \in \XXX_n \cap B_j$ of smallest index (this choice can be arbitrary, and is just for definiteness).
We also chose points $x_1^3, x_1^4 \in \XXX_n \cap B_1$ and $x_s^1, x_s^2 \in \XXX \cap B_s$, again by simply choosing points of minimum index. These points chosen as above can be viewed as preselected interface points between boxes $B_j$. In particular, we let $\overline{\Lambda}$ denote the set of TSP tours in $\XXX_n$ with the properties that, in that tour,
\begin{enumerate}[nosep]
	\item $x_1^4$ is joined to $x_1^3$ by a path lying entirely in $B_1$;
	\item for $1 \le j \le s-1$, $x_j^3$ and $x_{j+1}^1$ are adjacent;
	\item for $2 \le j \le s-1$, $x_j^1$ is joined to $x_j^3$ by a path lying completely in $B_j$;
	\item $x_s^1$ is joined to $x_s^2$ by a path lying entirely in $B_s$;
	\item for $s \ge j \ge 2$, $x_j^2$ and $x_{j-1}^4$ are adjacent; and
	\item for $s - 1 \ge j \ge 2$, $x_j^2$ is joined to $x_j^4$ by a path lying completely in $B_j$.
\end{enumerate}
\vspace*{1ex}
We will only restrict our attention to these special tours. Note that we are now only looking at
a smaller subset of tours. We claim that these tours have asymptotically
almost the same length as the TSP tour \emph{almost surely}. These tours are similar to those produced by Karp's \emph{fixed dissection} heuristic \cite{karp1977probabilistic}, which divides the square into $s$ boxes like we have, finds optimal tours through each, and then joins into a closed walk by means of an optimal tour through a set of representatives.

\newcommand{\tspbar}{\overline{\tsp}}
\newcommand{\tspf}{\tsp^F}

For the sake of notation, let $\tspbar(\XXX_n)$ denote the best tour in $\overline{\Lambda}$. Let $\tspf(\XXX_n)$ denote the tour given by fixed dissection heuristic. We claim that asymptotically
\[ \tsp(\XXX_n) \sim \tspbar(\XXX_n) \]
The proof is based off the techniques used to show $\tsp(\XXX_n) \sim \tspf(\XXX_n)$. We will leverage parts of Lemma 4 in Chapter 6 from \cite{lawler1985traveling}, in particular,
\begin{lemma}
	\label{l:karp_dissection}
	Let $\tsp(B_j)$ denote the best tour in $\XXX_n \cap B_j$. Then we have the following bound:
	\begin{equation}
		\label{eq:karp_dissection}
		\sum_{j=1}^s \tsp(B_j) \le \tsp(\XXX_n) + O\brac{n^{\frac{d-2}{d-1}}s^{\frac{1}{d(d-1)}}} + O\brac{s^{\frac{d-1}{d}}} = \tsp(\XXX_n) + O\brac{n^{\frac{d-1}{d}} \sigma^{-\frac{1}{d(d-1)}}} 
	\end{equation}
	where $s$ is the number of boxes $B_j$. Recall that $s = o(n)$.
\end{lemma}

Apart from finding the best tour in each cube $B_j$, the cost of modifying this solution into a path that starts at $x_j^1$ and ends at $x_j^3$ is at most $2 d^{1/2} s^{-1/d}$. The cost of patching edges between $B_j$ and $B_{j+1}$ also at most $2 d^{1/2} s^{-1/d}$. This gives us the upper bound:
\[ \tspbar(\XXX_n) \le \tsp(\XXX_n) + O\brac{n^{\frac{d-2}{d-1}}s^{\frac{1}{d(d-1)}}} + O\brac{s^{\frac{d-1}{d}}} + O\brac{s \cdot s^{-\frac{1}{d}}} \]
Since we choose $s = \tfrac{n}{\sigma} = o(n)$, we get
\begin{align*}
  \tsp(\XXX_n) 
	& \le \tspbar(\XXX_n) \le \tsp(\XXX_n) + O\brac{n^{\frac{d-1}{d}} \brac{\sigma^{-\frac{1}{d(d-1)}} + \sigma^{-\frac{d-1}{d}}}}\\
	& = \tsp(\XXX_n) + O\brac{n^{\frac{d-1}{d}} \sigma^{-\frac{1}{d(d-1)}}}
\end{align*}
where the $O$-notation hides constants dependent only on $d$. Note that this statement holds true deterministically.

Now we use the bounds on sizes of $\overline{\Lambda}$ and $\overline{\Lambda}_v = \Lambda_v \cap \overline{\Lambda}$ proved in \cite{frieze2015separating} (equations $29-32$). Let $\beta_j = \abs{\XXX_n \cap B_j}$, let $O_v^j$ denote the set of edges in $O_v$ that have both the endpoints in $B_j$ and let $I_v^j$ be the set of edges in $I_v$ that have both the endpoints in $B_j$. Let $I_v' \subseteq I_v$ denotes edges in $I_v$ of the form $\set*{x_j^3, x_{j+1}^1}$ or $\set*{x_j^2,x_{j-1}^4}$.
\begin{equation}
	\label{eqn:bound_l}
	\abs{\overline{\Lambda}} = (\beta_1 - 2)! \brac{\prod_{j=2}^{s-1} (\beta_j - 3)!} (\beta_s - 2)! 
\end{equation}
This bound follows since we can choose tour in every box $B_j$ by choosing the path from $x_j^1$ to $x_j^3$ and the path from $x_j^2$ to $x_j^4$, by choosing a permutation of $(\beta_j - 4)$ vertices ($(\beta_j - 4)!$ choices) and breaking it up into $2$ parts ($\beta_j - 3$ choices). First and last terms follow from a similar logic on box $B_1$ and $B_s$, which only have $2$ special vertices instead of $4$. Now, observe that $\Lambda_v = \emptyset$ unless $I_v = I_v' \cup \bigcup_{j=1}^s I_v^j$. To get an upper bound on $\overline{\Lambda}_v$, we look at the portion of tour in $B_j$, which can be represented as a permutation of $(\beta_j-3)$ symbols. Given an orientation of edges in $I_v^j$, each edge reduces the number of free symbols in the permutation by at $1$, giving us an upper bound of
\begin{equation*}
	\abs{\overline{\Lambda}_v}
	\le (\beta_1 - 2 - \abs{I_v^1})! 2^{\abs{I_v^1}} \brac{\prod_{j=2}^{s-1}(\beta_j - 3 - \abs{I_v^j})! 2^{\abs{I_v}}} (\beta_s - 2 - \abs{I_v^s})! 2^{\abs{I_v}} \\
\end{equation*}
Let $\overline{I}_v = \bigcup_{j=1}^s I_v^j$. Using Sterling's Approximation, we get
\begin{equation}
	\label{eqn:bound_l_v_inc}
	\abs{\overline{\Lambda}_v}
	\le \abs{\overline{\Lambda}} \cdot \prod_{j=1}^s \brac{\frac{2e}{\beta_j - 3}}^{\abs{I_v^j}}
	\le \abs{\overline{\Lambda}} \cdot e^{-\abs{\overline{I}_v}}
\end{equation}
assuming that $\beta_j \ge 2e^2 + 3$ for all $j$. Note that a Chernoff bound gives $\beta_j \in (1 \pm 0.5) \sigma$ for all $j$ with probability $n e^{-\sigma}$, where $s = \tfrac{n}{\sigma}$. This implies that \Cref{eqn:bound_l_v_inc} holds with high probability provided that $\sigma = \Omega(\log n)$.

On the other hand, observe that number of permutations on $\beta_j - 3$ symbols that avoid one particular edge is at most 
\[ (\beta_j - 3)! - (\beta_j - 4)! \le \brac{1 - \frac{1}{\beta_j}}(\beta_j - 3)! \]
simply by subtracting number of permutations that include this particular edge. Define $\delta_A = 1$ if $\abs{A} \ge 1$ and $\delta_A = 0$ otherwise. Then we have an upper bound
\begin{equation*}
	\abs{\overline{\Lambda}_v} \le \abs{\overline{\Lambda}} \cdot \prod_{j=1}^s \brac{1 - \frac{\delta_{O_v^j}}{\beta_j}}
\end{equation*}
Let $\overline{O}_v = \bigcup_{j = 1}^s \overline{O}_v^j$. Since $\beta_j \le 2 \sigma$ for all $j$ with high probability, there must be at least $\abs{\overline{O}_v^j} \brac{\tfrac{1}{2\sigma}}^2$ integers $j$ such that $\abs{O_v^j} \ge 1$. Therefore, we get the upper bound:
\begin{equation}
	\label{eqn:bound_l_v_exc}
	\overline{\Lambda}_v \le \overline{\Lambda} \cdot \brac{1 - \frac{1}{2\sigma}}^{\abs{\overline{O}_v}
	\brac{\frac{1}{2\sigma}}^2} \le \overline{\Lambda} \cdot e^{- \abs{\overline{O}_v}/(2\sigma)^3}
\end{equation}
Now that we have established these bounds, we know that a large $\overline{I}_v$ or large $\overline{O}_v$ forces $\overline{\Lambda}_v$ to be small. Define $\overline{L} = \set{v \in L \st \overline{\Lambda}_v \neq \emptyset}$. Note that $\overline{\Lambda} = \bigcup_{v \in \overline{L}} \overline{\Lambda}_v$. Now, since $\overline{\Lambda}$ itself is large, it suffices to show that $v \in \overline{L}$ implies that either $\overline{I}_v$ or $\overline{O}_v$ is large. Indeed, we have
\begin{lemma}
	Let $d$ be a fixed integer. If following conditions hold with correct constants (dependent on $d$),
	\[ 
		\sigma = \Omega(\log n) \qquad
		\tau = \Omega\brac{\sigma^{\frac{d}{d-1}}} \qquad
		c = O\brac{\tfrac{\log \sigma}{\log \log \sigma}}
	\]
	Then we have with high probability for all $v \in \overline{L}$, either 
	\[ \abs{\overline{I}_v} + \abs{\overline{O}_v} \ge t = \frac{n}{\tau},\]
	or else that 
	\[ \comb_c(\XXX_n \mid I_v, O_v) \le \tsp(\XXX_n) \]
	for large enough $n$.
\end{lemma}
\begin{proof}
	We will show that if $\abs{\overline{I}_v} + \abs{\overline{O}_b} \le t = \tfrac{n}{\tau}$, then $\comb_c(\XXX_n \mid I_v, O_v) \le \tsp(\XXX_n)$. 
	This proof has two components. First, we upper bound $\tspbar(\XXX_n \mid I_v, O_v)$ given that $\overline{I}_v, \overline{O}_v$ are small. More precisely, we will show that
	\begin{equation}
		\label{eq:conditional_tspbar}
		\tspbar(\XXX_n \mid I_v, O_v) \le \tspbar(\XXX_n) + O\brac{n^{\frac{d-1}{d}} \sigma^{\frac{d+1}{d}} \tau^{-1}}
	\end{equation}
	which follows from making local modifications to the optimal tour in $\overline{\Lambda}$.
	In the second part, we will bound the value of $\comb_c(\XXX_n \mid I_v, O_v)$ given that $\overline{I}_v, \overline{O}_v$ are small. In particular, we have:
	\begin{equation}
		\label{eq:conditional_comb}
		\comb_c(\XXX_n \mid I_v, O_v) \le \tspbar(\XXX_n \mid I_v, O_v) - O\brac{\brac{e^{-O(c \log c)} - \frac{1}{\tau} - \frac{1}{\sigma}} n^{\frac{d-1}{d}}}
	\end{equation}
	The proof of the second part it similar to that of \Cref{thm:comb_sep}.

	For the first part, notice that there are at most $t$ integers $j$ such that $\abs{I_v^j} + \abs{O_v^j} > 0$. We shall use the term \emph{restricted boxes} to denote all such boxes $B_j$.
	We construct a tour by modifying the optimal tour in $\overline{\Lambda}$, by replacing the portion of tour by any feasible tour in all the restricted boxes.
	Note that if there are no feasible tour in any of the boxes, then $\overline{\Lambda}_v = \emptyset$, and hence $v \notin \overline{L}$, which is a contradiction.
	Therefore, such a patching always exists. 

	In the restricted boxes, the total length of the tour can be the worst case length of the tour, which is $\beta_j s^{-1/d} \sqrt{d}$. Since with high probability, $\beta_j \le 2 \sigma$ for all $j$, we can conclude that
	\[ \tspbar(\XXX_n \mid I_v, O_v) \le \tspbar(\XXX_n) + 2 \sigma s^{-\frac{1}{d}} \frac{n}{\tau} = \tspbar(\XXX_n) + O\brac{n^{\frac{d-1}{d}} \sigma^{\frac{d+1}{d}} \tau^{-1}} \]

	On the other hand, following the proof of \Cref{l:smiley_face_lemma}, where we ensure that the smaller boxes of side-length $3D$ are contained in the boxes of side-length $s^{-1/d}$, we can find $e^{-O(c \log c)} n$ $(\varepsilon,D)$-copies of the gadget $\Pi^3_S(k)$, scaled by $n^{-1/d}$ (Note that $\varepsilon$ and $D$ also gets scaled by a factor $n^{-1/d}$). Here $\varepsilon = \Omega\brac{1/c}$ and $D = D_2$ is the absolute constant specified in \Cref{l:delta_single_visit}.
	
	Observe that \Cref{l:smiley_face_lemma} holds only when $\exp(O(c \log c)) = o(n)$, which is ensured since $c \log c = O(\log \sigma)$, and by choosing the constants correctly, we can ensure that $\exp(O(c \log c)) = O(\sigma^K) = o(n)$, where $K$ is an appropriately chosen constant.

	We look at the optimal TSP tour which has length $\tspbar(\XXX_n \mid I_v, O_v)$.
	We cannot directly use \Cref{l:parallel_exits,l:two_exits,l:pi_nice_corner,l:pi_s_nice_corner,l:delta_single_visit,l:nice_s_existance} on this tour to construct a solution that satisfies $\comb_c$, since the optimal tour in $\overline{\Lambda}$ might not by an optimal TSP tour.

	But, for any $(\varepsilon, D)$-copy of the gadget $S_1$ which is contained in box $B_j$, all the results will go through as long as we can perform the modification used in the proofs of \Cref{l:parallel_exits,l:two_exits,l:pi_nice_corner,l:pi_s_nice_corner,l:delta_single_visit,l:nice_s_existance} and get a tour that is contained in $\overline{\Lambda}_v$.
	These modifications can be made as long as any of the points in the $(\varepsilon, D)$-copy of the gadget or the points adjacent to these gadget are not contained in an edge in $\overline{I}_v$ or $\overline{O}_v$, and are not one of the special points $x_j^{\set{1,2,3,4}}$ used in definition of $\overline{\Lambda}$. Therefore, we can make these modifications on all but $O(s+t)$ gadgets!
	Therefore, we can construct a half-integral solution which satisfied $\comb_c$ constraints and respects the sets $I_v$ and $O_v$ of value at most
	\[ \tspbar(\XXX_n \mid I_v, O_v) - O\brac{\brac{e^{-O(c \log c)} - \frac{1}{\tau} - \frac{1}{\sigma}} n^{\frac{d-1}{d}}} \]
	In particular, this proves \Cref{eq:conditional_comb}.

	The condition $\tau = \Omega(\sigma^{d/(d-1)})$ along with \Cref{eq:conditional_tspbar,l:karp_dissection} implies that
	\[ \tspbar(\XXX_n \mid I_v, O_v) \le \tsp(\XXX_n) + O\brac{n^{\frac{d-1}{d}} \sigma^{-\frac{1}{d(d-1)}}} \]
	which again along with $\tau = \Omega(\sigma^{d/(d-1)})$ and $\sigma = \omega(1)$ gives us
	\[ \comb_c(\XXX_n \mid I_v, O_v) \le \tsp(\XXX_n) + n^{\frac{d-1}{d}} O\brac{\sigma^{-\frac{1}{d(d-1)}} - e^{-O(c \log c)}} \]
	We can now choose 
	\[ c = O\brac{\frac{\log \sigma}{d(d-1) \log \log \sigma}} = O\brac{\frac{\log \sigma}{\log \log \sigma}} \]
	to get that $\comb_c(\XXX_n \mid I_v, O_v) \le \tsp(\XXX_n)$ holds for large enough $n$.
\end{proof}

If $v \in \overline{L}$, then $\comb_c(\XXX_n \mid I_v, O_v) \ge \tsp(\XXX_n)$ and hence, by the result above, we must have that 
\[ \abs{\overline{I}_v} + \abs{\overline{O_v}} \ge \frac{n}{\tau} \]
Then by \Cref{eqn:bound_l_v_inc} and \Cref{eqn:bound_l_v_exc} gives
\[ \overline{\Lambda}_v \le \overline{\Lambda} e^{-\Omega\brac{\frac{n}{\sigma^3 \tau}}} \]
which implies that
\[ \abs{\overline{L}} \ge e^{\Omega\brac{\frac{n}{\sigma^3 \tau}}} \]

Observe that for a constant $c$, choosing $\sigma = K \log n$ and $\tau = \sigma^{d/(d-1)}$ gives us that 
\[ \abs{\overline{L}} \ge \exp\brac{\Omega\brac{\frac{n}{\log n^{4 + \frac{d}{d-1}}}}} = e^{\Omega\brac{\frac{n}{\log^5 n}}} \]
	which recovers \Cref{thm:bnb_restated}, that is
\tbnb*

Further, we get the exact same bound on the number of leaves when $c = O\brac{\tfrac{\log \log n}{\log \log \log n}}$. Similarly, for any $\varepsilon > 0$ we can set $\sigma = n^{\varepsilon}$ and $\tau = n^{\varepsilon d / (d-1))}$ to get that for any $c = O\brac{\tfrac{\varepsilon \log n}{\log \log n}}$ we have
\[ \abs{\overline{L}} \ge \exp\brac{\Omega\brac{\frac{n^{1 - \varepsilon}}{n^{\varepsilon \brac{4 + \frac{d}{d-1}}}}}} = e^{\Omega(n^{1 - 6 \varepsilon})} \]
which recovers \Cref{thm:bnb_restated_general}, that is
\tbnbg*

\newpage
\bibliographystyle{plain}
\bibliography{main.bib}

\begin{thebibliography}{10}

\bibitem{applegate2006traveling}
David~L Applegate, Robert~E Bixby, Vasek Chv\'atal, and William~J Cook.
\newblock {\em The traveling salesman problem: a computational study}.
\newblock Princeton university press, 2006.

\bibitem{arora1996polynomial}
Sanjeev Arora.
\newblock {Polynomial time approximation schemes for Euclidean TSP and other
  geometric problems}.
\newblock In {\em Proceedings of 37th Conference on Foundations of Computer
  Science}, pages 2--11. IEEE, 1996.

\bibitem{beardwood1959shortest}
Jillian Beardwood, John~H Halton, and John~Michael Hammersley.
\newblock The shortest path through many points.
\newblock In {\em Mathematical Proceedings of the Cambridge Philosophical
  Society}, volume 55 Issue 4, pages 299--327. Cambridge University Press,
  1959.

\bibitem{bertsimas1990asymptotic}
Dimitris~J Bertsimas and Garrett Van~Ryzin.
\newblock An asymptotic determination of the minimum spanning tree and minimum
  matching constants in geometrical probability.
\newblock {\em Operations Research Letters}, 9(4):223--231, 1990.

\bibitem{carr1997separation}
Robert Carr.
\newblock {Separating Clique Trees and Bipartition Inequalities Having a Fixed
  Number of Handles and Teeth in Polynomial Time}.
\newblock {\em Mathematics of Operations Research}, 22(2):257--265, 1997.

\bibitem{carr2004separation}
Robert Carr.
\newblock Separation algorithms for classes of stsp inequalities arising from a
  new stsp relaxation.
\newblock {\em Mathematics of Operations Research}, 29(1):80--91, feb 2004.

\bibitem{frieze2015separating}
Alan Frieze and Wesley Pegden.
\newblock {Separating subadditive Euclidean functionals}.
\newblock {\em Random Structures and Algorithms}, 51, jan 2015.

\bibitem{goemans1991probabilistic}
Michel~X Goemans and Dimitris~J Bertsimas.
\newblock {Probabilistic analysis of the Held and Karp lower bound for the
  Euclidean traveling salesman problem}.
\newblock {\em Mathematics of operations research}, 16(1):72--89, 1991.

\bibitem{grotschel1986clique}
Martin Gr{\"o}tschel and William~R Pulleyblank.
\newblock Clique tree inequalities and the symmetric travelling salesman
  problem.
\newblock {\em Mathematics of operations research}, 11(4):537--569, 1986.

\bibitem{held1971traveling}
Michael Held and Richard~M Karp.
\newblock The traveling-salesman problem and minimum spanning trees: Part ii.
\newblock {\em Mathematical programming}, 1(1):6--25, 1971.

\bibitem{karp1977probabilistic}
Richard~M Karp.
\newblock Probabilistic analysis of partitioning algorithms for the
  traveling-salesman problem in the plane.
\newblock {\em Mathematics of operations research}, 2(3):209--224, 1977.

\bibitem{lawler1985traveling}
Eugene Lawler.
\newblock {\em The Traveling salesman problem : a guided tour of combinatorial
  optimization}.
\newblock Wiley, Chichester West Sussex New York, 1985.

\bibitem{mitchell1999guillotine}
Joseph~SB Mitchell.
\newblock Guillotine subdivisions approximate polygonal subdivisions: A simple
  polynomial-time approximation scheme for geometric tsp, k-mst, and related
  problems.
\newblock {\em SIAM Journal on computing}, 28(4):1298--1309, 1999.

\bibitem{papadimitriou1978probabilistic}
CH~Papadimitriou.
\newblock The probabilistic analysis of matching heuristics.
\newblock In {\em Proc. 15th Annual Conference Comm. Contr. Computing}. Univ.
  Illinois Champaign, IL, 1978.

\bibitem{schrijver2003combinatorial}
Alexander Schrijver.
\newblock {\em Combinatorial optimization: polyhedra and efficiency},
  volume~24.
\newblock Springer Science \& Business Media, 2003.

\bibitem{steele1981subadditive}
J~Michael Steele.
\newblock {Subadditive Euclidean functionals and nonlinear growth in geometric
  probability}.
\newblock {\em The Annals of Probability}, pages 365--376, 1981.

\bibitem{steinerberger2015new}
Stefan Steinerberger.
\newblock New bounds for the traveling salesman constant.
\newblock {\em Advances in Applied Probability}, 47(1):27--36, 2015.

\end{thebibliography}

\newpage
\appendix
\section{Quantitative Bounds for Properties of Gadgets}
\label{s:distance_bounds}
This section will provide quantitative bounds to some properties of $(\varepsilon,D)$-copies of a gadget $T$. We will give bounds on $\varepsilon$ and $D$ in order to satisfy \Cref{obs:gadget_entry_bound}, and a slightly stronger version of it.
First, we set up some notation.
Given a sets $S \subseteq V \subseteq \R^2$ and an Hamiltonian path $P$ on $V$, we say that $S$ is connected to $V \setminus S$ through a pair of edges $e_1, e_2$ in $P$ if $e_1, e_2 \in \delta(S,V \setminus S)$, and $e_1$ and $e_2$ are connected in $P$ through a path completely contained in $S$.
\begin{lemma}
	\label{l:parallel_exits}
	Let $S$ be a gadget with diameter $d$, and let $P$ be an optimal Hamiltonian path through $V$.
	Given $\varepsilon > 0$ and $\theta > 0$, there is $D \ge D(\varepsilon,\theta,d)$ such that if $S_1$ is any $(\varepsilon,D)$-copy of the gadget $S$ such that there are two or more pairs of edges joining $S_1$ to $V \setminus S_{\varepsilon,D}$ in $P$ then the angle between any connecting pair of edges is at least $\pi - \theta$.
	In particular,
	\begin{equation}
		\label{eq:parallel_exits_bound}
		D(\varepsilon,\theta,d) =  \frac{6d+12\varepsilon}{1 - \cos \theta}
	\end{equation}
	suffices.
\end{lemma}
\begin{proof}
	Suppose $e_1, e_2$ is a pair of edges connecting $S_1$ to $V \setminus S_1$. Let $e_i = \set{p_i,x_i}$ where $x_i \in S_1$, $p_i \notin S_1$ for $i = 1,2$. First, we make a precise definition of the angle between these two edges using the cosine formula.
	
	\begin{definition}
		\label{d:angle}
		The angle between $\overrightarrow{x_1 p_1}$ and $\overrightarrow{x_2 p_2}$ denoted by $\measuredangle(\overrightarrow{x_1 p_1}, \overrightarrow{x_2 p_2})$ is the angle $\phi \in [0,\pi]$ such that
		\[ \cos(\phi) = \frac{\inprod{x_1 p_1}{x_2 p_2}}{\norm{x_1p_1} \cdot \norm{x_2p_2}} \]
	\end{definition}

	Let $\phi = \measuredangle(\overrightarrow{x_1 p_1}, \overrightarrow{x_2 p_2})$. Let $f_1, f_2$ be any other pair of edges connecting $S_1$ to $V \setminus S_1$. Let $f_i = \set{q_i, y_i}$ where $y_i \in S_1, q_i \notin S_1$ for $i = 1,2$. Since $P$ is optimal Hamiltonian path, \emph{short-cutting} $p_1,p_2$ must give a longer path. To be precise, the path $Q$ obtained by deleting edges $e_1,e_2$, $y_1 z$ where $z \neq q_1$, and adding edges $p_1 p_2$, $y_1 x_1$, $x_2 z$, is longer than the path $P$.
	In particular, we must have
	\begin{equation}
		\label{eq:parallel_exits_1}
	  \ell (p_1 p_2) + 2d + 4\varepsilon \ge \ell (p_1 x_1) + \ell (p_2 x_2)
	\end{equation}
	Let $p_2'$ be a point such that $x_1y_1p_2p_2'$ is a parallelogram. Therefore, $\ell (p_1p_2) \le \ell (p_1p_2') + d + 2\varepsilon$. Hence, it must hold that
	\begin{equation}
		\label{eq:parallel_exits_2}
		\ell (p_1 p_2') + 3d + 6\varepsilon \ge \ell (p_1 x_1) + \ell (p_2 x_2)
	\end{equation}
	Let $a = \ell (p_1 x_1), b = \ell (p_2 x_2), c = \ell (p_1 p_2')$. Then by definition of $\phi$,
	\[ c^2 = a^2 + b^2 - 2ab \cos \phi \]
	Using this, we get
	\begin{align*}
	  \ell (p_1 x_1) + \ell (p_2 x_2) - \ell (p_1 p_2')
		& = \frac{(a+b)^2 - c^2}{a + b + c} \\
		& \ge \frac{(a+b)^2 - c^2}{2(a+b)} \tag*{Since $a+b \le c$}\\
		& = \frac{2ab(1 + \cos \phi)}{2(a+b)} = \frac{ab(1-\cos \phi)}{a+b}
	\end{align*}
	Since $S_1$ is an $(\varepsilon,D)$ copy of $S$, $a,b \ge D$. Under this condition $\tfrac{ab}{a+b}$ is minimized at $a = b = D$, implying that $\ell (p_1 x_1) + \ell (p_2 x_2) - \ell (p_1 p_2') \ge \tfrac{D(1 + \cos \phi)}{2}$. Hence, for \Cref{eq:parallel_exits_2} to hold, we must have
	\[ 3d+6\varepsilon \ge \frac{D(1 + \cos \phi)}{2} \]
	In particular, if 
	\[ D \ge \frac{6d+12\varepsilon}{1 - \cos \theta} \]
	then $1 + \cos \phi \le 1 - \cos \theta \implies \phi \ge \pi - \theta $, which completes the proof giving us the bound in \Cref{eq:parallel_exits_bound}.
\end{proof}

\begin{lemma}
	\label{l:two_exits}
	Let $S$ be a gadget with diameter $d$, and let $P$ be an optimal Hamiltonian path through $V$.
	Given $\varepsilon > 0$ and $\tfrac{\pi}{4} \ge \theta > 0$, there is $D \ge D(\varepsilon,\theta,d)$ such that if $S_1$ is any $(\varepsilon,D)$-copy of the gadget $S$ then there are at most $2$ pairs of edges joining $S_1$ to $V \setminus S_1$. Further, all the four edges joining $S_1$ to $V \setminus S_1$ make an acute angle of at most $2 \theta$ with each other.
	In particular, 
	\begin{equation}
		\label{eq:two_exits_bound}
		D(\varepsilon,\theta,d) =  \frac{6d+12\varepsilon}{1 - \cos \theta}
	\end{equation}
	suffices.
\end{lemma}
\begin{proof}
	Let $e_1,e_2$ be a pair of edges joining $S_1$ to $V \setminus S_1$ such that $e_i = \set{x_i, p_i}$ where $x_i \in S_1, p_i \notin S_1$ for $i = 1,2$.
	Let $f_1, f_2$ be a pair of edges joining $S_1$ to $V \setminus S_1$ such that $f_i = \set{y_i,q_i}$ where $y_i \in S_1, q_i \notin S_1$ for $i = 1,2$. Further, let $p_2$ and $q_1$ be through portion of $P$ that does not contain $x_2$.

	Since $P$ is an optimal Hamiltonian path, the Hamiltonian path $Q$ obtained by deleting edges $p_1 x_1$, $q_1 y_1$ and adding edges $x_1 y_1$, $p_1 q_1$, must by as long. Therefore, we must have
	\[ d \ge \ell (p_1 x_1) + \ell (q_1 y_1) - \ell (x_1 y_1) \]
	By the computations in \Cref{l:parallel_exits}, for $D \ge \tfrac{6d + 12\varepsilon}{1 - \cos \theta}$, this hold only if $\measuredangle(\overrightarrow{x_1 p_1}, \overrightarrow{y_1 q_1}) \ge \pi - \theta$. This observation combined with \Cref{l:parallel_exits} implies that all four edges $e_1,e_2,f_1,f_2$ make an acute angle of at most $2\theta$ with each other (This holds even if they are not coplanar!).

	Now, assume that there is another pair of edges $g_1, g_2$ joining $S_1$ to $V \setminus S_1$, such that $g_i = \set{z_i,r_i}$ where $z_i \in S_1, r_i \notin S_1$ for $i = 1,2$ and $q_2$ and $r_1$ are connected through portion of $P$ that does not contain $y_2$. Then we have
	\begin{align*}
	\measuredangle(\overrightarrow{x_1 p_1}, \overrightarrow{y_1 q_1}) & \ge  \pi - \theta \\
	\measuredangle(\overrightarrow{y_1 q_1}, \overrightarrow{r_1 z_1}) & \ge  \pi - \theta \\
	\measuredangle(\overrightarrow{x_1 p_1}, \overrightarrow{r_1 z_1}) & \ge  \pi - \theta
	\end{align*}
	This leads to contradiction, since first two equations imply $\overrightarrow{x_1,p_1}$ and $\overrightarrow{r_1 z_1}$ are on the same side of hyperplane $\inprod{v}{q_1 - y_1} = 0$. But, the third equation implies otherwise!
\end{proof}

\section{Properties of Hamiltonian Paths in the Gadgets}
\label{s:gadget_proofs}
In this section, we will provide proofs of various geometrical lemma regarding properties of the gadgets in this section. These include proofs of \Cref{l:pi_nice_corner,l:pi_s_nice_corner,l:delta_single_visit}.

\subsection{Proof of \texorpdfstring{\Cref{l:pi_nice_corner}}{lemma \ref{l:pi_nice_corner}}}
Let us begin by recall definition of $\Pi(t,h,w)$ and $\Pi_S = \Pi(S,t,h,w)$ (\Cref{d:pi_t_h_w,d:pi_s}):
\dPiTHW*
\dPiS*
\noindent Now we are ready to provide proofs of lemmas in \Cref{ss:expanding_the_gadget}.
\label{ss:proof_pi_nice_corner}
\lPiNiceCorner*
\begin{proof}
	We begin with a few observations:
	\begin{observation}
		\label{obs:pi_vertex_order}
		If $P' = a v_{i_1} \ldots v_{i_k} \pi_i$ is a contiguous segment in $P$, then either $i_1 < \ldots < i_k$ or $i_k < \ldots < i_1$.
	\end{observation}
	\noindent Suppose not. Let $j_1, \ldots, j_k$ be a sorting of $i_1, \ldots, i_k$ in increasing order. Then $j_1$ and $j_k$ appear somewhere in $P'$. Suppose $j_1$ appears before $j_k$. For notational convenience, let $\ell (a_1 \ldots a_j)$ denote the length of the path $a_1, \ldots a_j$.
	\begin{align*}
		\ell (a v_{i_1} \ldots v_{i_k} \pi_i)
		& \ge \ell (av_{i_1}) + \ell (v_{i_1} v_{j_1}) + \ell (v_{j_1} v_{j_k}) + \ell (v_{j_k} v_{i_k}) + \ell (v_{i_k} \pi_i) \\
		& \ge \ell (av_{j_1}) + \ell (v_{j_1} v_{j_k}) + \ell (v_{j_k}\pi_i) \tag*{Triangle Inequality}\\
		& \ge \ell (av_{j_1} \ldots v_{j_k}\pi_i)
	\end{align*}
	Similarly, in the case when $j_k$ appears before $j_1$, we get
	\[ \ell (a v_{i_1} \ldots v_{i_k} \pi_i) \ge \ell (a v_{j_k} \ldots v_{j_1} \pi_i) \]
	\begin{observation}
		If $P' = av_{i_1} \ldots v_{i_k}b$, then we can assume that $i_1 \ldots i_k$ is a continuous subset of $[t]$.
	\end{observation}
	\noindent First, we can by \Cref{obs:pi_vertex_order} assume $i_1, \ldots, i_k$ are sorted either in increasing order or decreasing order. Without loss of generality, let $i_1 < i_k$. Further, let $p$ be an index such that $i_1 < p < i_k$ that is not contained in the set $\set{i_1, \ldots, i_k}$. Then we can insert $v_p$ into $v_{i_1} \ldots v_{i_k}$ without changing the total length of the portion $P'$. On the other hand, shortcut through $v_p$ in $P$ whenever $v_p$ was present may decrease the total length. Thus, this replacement can only get us a shorter path.

	\noindent Using the two observations, we can assume that the shortest Hamiltonian path $P$ looks like this:
	$p \overline{v_{i_1} v_{j_1}} c_1 \overline{v_{i_2} v_{j_2}} c_2 \ldots c_4 \overline{v_{i_5} v_{j_5}} q$
	Where by $\overline{v_{i_1} v_{j_1}}$ we mean the path containing all the vertices between $v_{i_1}$ and $v_{j_1}$.
	Let $\mathcal{C} = \set{\pi_1,\pi_2,\pi_3,\pi_4}$ denote the set of four corners.
	\begin{observation}
		\label{obs:corner_improve}
		Let $p$ such that $\dist(p, \Pi(t,h,w)) \ge D$, and let $v_i, v_j$ be any points in $\set{v_1, \ldots, v_t}$.
		Then if $D \ge \tfrac{h^2 + w^2}{4w}$ and $h \ge 2w$ then
		\begin{equation}
			\label{eq:corner_improve}
			\ell (p v_i) + \ell (v_j c) \ge \dist(p, \mathcal{C}) + \frac{w}{4}
		\end{equation}
		for $c \in \set{\pi_1,\pi_2,\pi_3,\pi_4}$.
	\end{observation}
	Suppose $p = (x_1,y_1) \in \R^2$. We will prove the result by working on different cases based on $(x_1,y_1)$.
	\begin{case}
		\label{case1:corner_improve}
		$y_1 \ge h$: Without loss of generality, assume that $x_1 \ge 0$. Then $\ell (p v_i) \ge \ell (p v_t)$ and $\ell (v_j c) \ge \frac{w}{2} = \ell (v_t \pi_3)$. Therefore, by triangle inequality,
		\[ \ell (p v_i) + \ell (v_j c) \ge \ell (p v_t) + \ell (v_t \pi_3) = \frac{w}{2} + \ell (p v_t) \]
		If $\ell (p v_t) \ge \ell (p \pi_3)$, we get the result in this case.
		Therefore, we can assume that $x \le \frac{w}{4}$. Since $\ell (p v_t) \ge (y_1 - h)$, we it suffices to show that
		\[ \brac{\bar{y}_1 + \frac{w}{2} - \frac{w}{4}}^2 \ge \dist(p, \mathcal{C})^2 = \bar{y}_1^2 + \brac{x - \frac{w}{2}}^2 \]
		where $\bar{y}_1 = y_1 - h$. Since $0 \le x_1 \le \tfrac{w}{4}$, it suffices to show that 
		\[ \brac{\bar{y}_1 + \frac{w}{4}}^2 \ge \bar{y}_1^2 + \frac{w^2}{4} \]
		This is satisfied when $y' \ge \tfrac{3w}{8}$. Since $\dist(p, \Pi(t,h,w)) \ge \bar{y}_1$, this holds when $D \ge \tfrac{h^2 + w^2}{4w}$ and $h \ge 2w$. 
	\end{case}
	\begin{case}
		\label{case2:corner_improve}
		$y_1 \le 0$: This case holds due to computations similar to \Cref{case1:corner_improve}.
	\end{case}
	\begin{case}
		\label{case3:corner_improve}
		$0 \le y_1 \le h$ and $x_1 > 0$: In this case $\ell (x v_i) \ge x_1$ and $\ell (v_j c) \ge \frac{w}{2}$, therefore, \Cref{eq:corner_improve} holds if and only if
		\[
			\brac{x_1 + \frac{w}{4}}^2 \ge \brac{x_1 - \frac{w}{2}}^2 + y_1^2
			\qquad \text{or} \qquad
			\brac{x_1 + \frac{w}{4}}^2 \ge \brac{x_1 - \frac{w}{2}}^2 + \brac{y_1 - h}^2
		\]
		We will look at the region where a stronger condition holds, namely
		\[
			x_1^2 \ge \brac{x_1 - \frac{w}{2}}^2 + y_1^2
			\qquad \text{or} \qquad
			x_1^2 \ge \brac{x_1 - \frac{w}{2}}^2 + \brac{y_1 - h}^2
		\]
		These constraints define region bounded by parabolas, and point of intersection of these two parabolas is the point furthest away from $\Pi(t,h,w)$ where both the conditions fail.
		The point of intersection of the parabolas is given by $p = \brac{\tfrac{h^2 + w^2}{4w},\tfrac{h}{2}}$. 
		Therefore, \Cref{eq:corner_improve} holds for all points $p$ satisfying $x_1 \ge \tfrac{h^2 + w^2}{4w}$.
		Since all points outside both the parabolas satisfy $x_1 \ge \dist(p, \mathcal{C})$, result holds for $D = \tfrac{h^2 + w^2}{4w}$, since 
	\end{case}
	\begin{case}
		\label{case4:corner_improve}
		$0 \le y_1 \le h$ and $x_1 < 0$: Following the same computations as in \Cref{case3:corner_improve}, we get the exact same condition on $D$.
	\end{case}
	Now we are ready to prove structure of $P$, but first we need one definition.
	\begin{definition}
		\label{d:distance_without_center}
		Consider any Hamiltonian path $P$ that looks like
		$p \overline{v_{i_1} v_{j_1}} c_1 \overline{v_{i_2} v_{j_2}} c_2 \ldots c_4 \overline{v_{i_5} v_{j_5}} q$.
		For a subpath $p' \overline{v_i v_j} q'$ of $P$, where $p',q' \in \set{p,q,c_1,c_2,c_3,c_4}$, we define $d(p'q')$ as follows:
		\begin{itemize}[nosep]
			\item $d(p'q') = \ell (p' v_i) + \ell (q' v_j)$ if $\overline{v_i v_j} \neq \emptyset$
			\item $d(p'q') = \ell (p'q')$ if $\overline{v_i v_j} = \emptyset$
		\end{itemize}
	\end{definition}
	\Cref{obs:corner_improve} implies that $d(p,c_1) \ge \dist(p,\mathcal{C})$. Further, for $1 \le a \le 3$, we have $d(c_\alpha, c_{\alpha+1}) \ge \min(h,w) = w$, since if $\overline{v_{i_{\alpha+1}} v_{j_{\alpha+1}}} \neq \emptyset$, $\ell (c_\alpha v_{i_{\alpha+1}}) + \ell (v_{j_{\alpha+1} c_{\alpha+1}}) \ge \frac{w}{2} + \frac{w}{2} = w$.
	There for we have the lower bound on length of any optimal Hamiltonian path $P$ from $p$ to $q$:
	\[ d(p,c_1) + d(c_1,c_2) + d(c_2,c_3) + d(c_3,c_4) + d(c_4,q) + \sum_{i=1}^{5} l(v_{i_1} v_{j_1}) \ge \dist(p,\mathcal{C}) + \dist(q,\mathcal{C}) + 3w + h\brac{1 - \frac{4}{t}} \]
	Note that since $p,q$ are on different sides of line $y = \frac{h}{2}$, the nearest corners from $p,q$ respectively are different and are not on the same short side of the gadget. Therefore, we can construct a Hamiltonian path $Q$ such that 
	\[ \ell (Q) \le \dist(p,\mathcal{C}) + \dist(q,\mathcal{C}) + 3w + h \]
	In the path $P$, if the path $c_1c_2c_3c_4$ contains two longer sides of the gadget, then we have 
	\[ \ell (P) \ge \dist(p,\mathcal{C}) + \dist(q,\mathcal{C}) + w + 2h \]
	which is longer that $Q$ if $h \ge 2w$.
	\Cref{obs:corner_improve} further implies that if $\overline{v_{i_1} v_{j_1}} \neq \emptyset$, then 
	\[ \ell (P) \ge \dist(p,\mathcal{C}) + \dist(q,\mathcal{C}) + 3w + \frac{w}{4} + h - \frac{4h}{t} \] 
	Therefore, when $\frac{w}{4} \ge \frac{4h}{t}$ or equivalently $t \ge \frac{16h}{w}$, $\overline{v_{i_1} v_{j_1}} = \emptyset$ and $\overline{v_{i_5} v_{j_5}} = \emptyset$. Thus, the shortest Hamiltonian path $P$, is determined by choice of $\overline{v_{i_\alpha} v_{j_\alpha}}$ for $\alpha = 2,3,4$.
	Suppose without loss of generality that $c_1 c_2$ is the shorter side of the gadget given by $y=0$. Then the values of $i_\alpha, j_\alpha$ that minimize $d(c_1 c_2) + d(c_2 c_3) + d(c_3 c_4)$ are given by ${i_2} = {j_2} = 0, i_3 = 1, j_3 = t-1, i_4 = j_4 = t$.
	This completely describes the shortest Hamiltonian path $P$, and both points $c_2, c_3$ satisfy the condition in the lemma, completing the proof.
\end{proof}

\subsection{Proof of \texorpdfstring{\Cref{l:pi_s_nice_corner}}{lemma \ref{l:pi_s_nice_corner}}}
\label{ss:proof_pi_s_nice_corner}
\lPiSNiceCorner*
\begin{proof}
	Let $\pi_i$ denote the center of $S_i$. Let $\mathcal{S} = \bigcup_{i=1}^4 S_i$ and $\mathcal{C} = \set{\pi_1, \ldots, \pi_4}$.
	Note that \Cref{obs:corner_improve} holds with when $D \ge \tfrac{h^2 + w^2}{4w}$ with 
	\[ \ell (p v_i) + \dist(v_j S) \ge \dist(p S) + \frac{w}{4} - 8 \]
	Since $\ell (p v_i) + \ell (v_j c) \ge \ell (p c) + \tfrac{w}{4}$ for center $c$ of the gadget $S$ and $\dist{v_j S} \ge \ell (v_j c) - 4$ and $\ell (p c) \ge \dist(p S) - 4$.
	Further, we can extend the path that we obtain in the proof of the previous lemma by including an Hamiltonian path through $S_i$ when the path is supposed to visit $\pi_i$ to get a Hamiltonian path $P_1$ from $p$ to $q$ with length at most
	\begin{equation}
		\label{eq:p_1_upper_bound}
		\ell (P_1) \le \dist(p, \mathcal{S}) + \dist(q, \mathcal{S}) + 3(w-8) + h + 4(10 \pi + 8) + 16
	\end{equation}
	since length of tour in each gadget is $10\pi + 8$, actual distance between two closest gadgets is $w - 8$, and since we must enter and exit next in adjacent vertices to extend the tours as defined in \Cref{fig:single_entry}, we pay an additional factor of $8$.
	Now, we extend \Cref{d:distance_without_center} to sets:
	\begin{definition}
		\label{d:distance_without_center_set}
		Given a Hamiltonian path $P$ in $\set{p,q} \cup \Pi(S(k),t,h,w)$ from $p$ to $q$, which can be represented as
		$pv_{i_1} v_{j_1}T_1 \ldots v_{i_u} v_{j_u} T_u v_{i_{u+1}} v_{j_{u+1}} q$
		where for each $i$, $T_i$ is a path such that $T_i \subseteq S_j$ for some $j \in \set{1,\ldots,4}$.
		For any two sets $R_1, R_2 \in \set{\set{p},\set{q}, T_1, \ldots, T_u}$, such that there is a subpath
		$p'\overline{v_i v_j} q'$ in $P$, we define $d(R_1,R_2)$ as follows:
		\begin{itemize}[nosep]
			\item $d(R_1 R_2) = \dist(R_1 v_i) + \dist(R_2 v_j)$ if $\overline{v_i v_j} \neq \emptyset$
			\item $d(R_1 R_2) = \dist(R_1 R_2)$ if $\overline{v_i v_j} = \emptyset$
		\end{itemize}
	\end{definition}
	\begin{observation}
		\label{obs:p_visits_once}
		There is an absolute constant $C$ such that when $k \ge C$, then $P$ visits each $S_i$ exactly once.
	\end{observation}
	We can write $P$ as 
	$p\overline{v_{i_1} v_{j_1}}T_1 \ldots \overline{v_{i_u} v_{j_u}} T_u \overline{v_{i_{u+1}} v_{j_{u+1}}} q$,
	where $T_i$ is a path such that $T_i \subseteq S_j$ for some $j \in \set{1,\ldots,4}$.
	Then we have $d(T_\alpha, T_{\alpha+1}) \ge w - 8$ and $d(\set{p},T_1) \ge \dist(p,\mathcal{S})$ and $(\set{q},T_u) \ge \dist(q, \mathcal{S})$. Note that each point in $S_i$ must still be connected to some vertex, and sum of the distances between each vertex and it's nearest neighbor is $40\pi$. This gives the lower bound:
	\begin{equation}
		\label{eq:p_lower_bound}
		\ell (P) \ge \dist(x, \mathcal{C}) + \dist(y, \mathcal{C}) + (u-1)(w-8) + h + 4 \brac{10 \pi + 8 - O\brac{\frac{1}{k}}} - O\brac{\frac{u}{k}}
	\end{equation}
	The additive correction $O\brac{\tfrac{u}{k}}$ is to account for double counting. All the vertices in $\mathcal{S}$ that are connected to something outside are counted twice, once in $40\pi$ and once in $(u-1)(w-8)$. We must subtract their contribution in the $40\pi$ term, which is at most $\tfrac{4\pi}{k}$ for each vertex. Since number of these connecting vertices is at most $2u$, we get the additive correction factor, with $8\pi$ being the constant hidden in $O$-notation.
	Observe that Since $P$ is a shortest Hamiltonian path, it is shorter than $P_1$, and hence we must have
	\[ \brac{u-4}\brac{w-8 - \frac{8\pi}{k}} - \frac{32 \pi}{k} - 16 \le 0 \]
	It follows that $u \le 4$ if $w \ge 100$ for $k \ge 16 \pi$. This finishes the proof of \Cref{obs:p_visits_once}.

	Therefore, $P$  looks like
	$p\overline{v_{i_1} v_{j_1}}T_1 \ldots \overline{v_{i_4} v_{j_4}} T_4 \overline{v_{i_{5}} v_{j_{5}}} q$.
	If $\overline{v_{i_1} v_{j_1}} \neq \emptyset$ then \Cref{obs:corner_improve} gives a better lower bound on $\ell (P)$. In particular, it increases the lower bound in \Cref{eq:p_lower_bound} by $\tfrac{w-8}{4}$. Comparing this lower bound on $\ell (P)$ with upper bound on $\ell (P_1)$  given in \Cref{eq:p_1_upper_bound}, following must hold
	\[ \frac{w-8}{4} - 16 - O\brac{\frac{1}{k}} - \frac{4h}{t} \le 0  \]
	This fails to hold when $w \ge 100$ and $t \ge 2h$ for large enough $k$. Hence, we can conclude that $\overline{v_{i_1} v_{j_1}} = \overline{v_{i_5} v_{j_5}} = \emptyset$.
	Hence, if $\tfrac{h}{t} \approx \frac{4\pi}{k}$, then we can change $P$ to $Q$ by replacing tour inside $T_2$ and $T_3$ by the Hamiltonian path described in \Cref{fig:single_entry}, and connecting it to it's nearest neighbors among $v_1, \ldots, v_t$, which are either $\set{v_1,v_2}$ or $\set{v_{t-1},v_t}$ by choice of $t$. Note that this replacement strictly reduces the total cost outside the gadget, and is optimal inside the gadget up to an additive factor of $O(1/k)$. Therefore, we get the path $Q$ such that
	\[ \ell (Q) \le \ell (P) + O\brac{\frac{1}{k}} \]
\end{proof}

\subsection{Proof of \texorpdfstring{\Cref{l:delta_single_visit}}{lemma \ref{l:delta_single_visit}}}
\label{ss:proof_delta_single_visit}
\lDelta*
\begin{proof}
	Since we choose $\Pi_S(k) = \Pi(S(k), \tfrac{200k}{4\pi}, 200, 100)$, the diameter of $\Pi_S(k)$ is at most $300$.
	Let
	\[ D_1 = \frac{2000}{1 - \cos \frac{\pi}{10}} \]
	be chosen to satisfy conditions of \Cref{l:parallel_exits,l:two_exits} for $\Pi_S(k)$ and $\theta = \tfrac{\pi}{10}$. Then $\Delta(D_1, \Pi_S(k))$ has diameter at most $\tfrac{5000}{1 - \cos(\pi/10)}$.
	Let
	\[ D_2 = \frac{30000}{\brac{1 - \cos \frac{\pi}{10}}^2} \]
	be chosen to satisfy conditions of \Cref{l:parallel_exits,l:two_exits}  for $\Delta(D_1, \Pi_S(k))$ and $\theta = \tfrac{\pi}{10}$.
	It follows that $\Delta_1$ and $\Pi_i$ for $i = 1,2,3$ can be visited by $P$ at most twice, and if they are visited by $P$ exactly twice, then all the four edges exiting the corresponding set are nearly parallel. We will say that $P$ connects two sets $X,Y \subseteq V$ if and only if $P$ contains an edge going from $X$ to $Y$. Now, we do cases based on how many times these sets are visited.

	\case Suppose that there is $\Pi_i$ such that $P$ visits $\Pi_i$ twice. Without loss of generality, we will assume that $P$ visits $\Pi_1$ twice. Let $e_1,e_2$ and $f_1,f_2$ be two pairs of edges connecting $\Pi_1$ to $V \setminus \Pi_1$. If $g_1$ connects $\Pi_1$ to $\Pi_2$, and $g_2$ connects $\Pi_1$ to $\Pi_3$, where $g_1, g_2 \in \set{e_1,e_2,f_1,f_2}$, then $g_1$ and $g_2$ have an acute angle of at most $\tfrac{\pi}{3}$ between them. Since $\tfrac{\pi}{3} \ge \tfrac{\pi}{5}$, this contradicts \Cref{l:two_exits}. Hence, $P$ connects $\Pi_1$ to exactly one of $\Pi_2, \Pi_3$. 

	\subcase If $\Pi_1$ is connected to neither $\Pi_2, \Pi_3$, then $P$ visits $\Delta_1$ at least $3$ times, twice in $\Pi_1$, and once in $\Pi_2 \cup \Pi_3$, which is a contradiction to \Cref{l:two_exits}. Without loss of generality, let $\Pi_1$ be connected to $\Pi_2$. Note that if $\Pi_2$ is not connected to $\Pi_3$, then $P$ visits $\Delta_1$ at least thrice, twice in $\Pi_1 \cup \Pi_2$, and at least once in $\Pi_3$.

	\subcase If $P$ visits $\Pi_2$ twice, then $\Pi_2$ cannot be connected to $\Pi_3$, since it is already connected to $\Pi_1$, which is a contradiction.

	\subcase If $P$ visits $\Pi_2$ exactly once, then $P$ must connect $\Pi_2$ to both $\Pi_1, \Pi_3$, and since $\Pi_1$ and $\Pi_3$ are on opposite sides of $\overleftrightarrow{OC_2}$, $i = 2$ satisfies all the conditions of the lemma. 
	
	\case Suppose that each of $\Pi_1, \Pi_2, \Pi_3$ is visited exactly once. Now, we have two cases based on how many times $\Delta_1$ is visited.

	\subcase If $\Delta_1$ is visited exactly once, then $P$ must visit $\Pi_1, \Pi_2, \Pi_3$ in some order, covering the whole set. Suppose this order is $\Pi_{j_1} \Pi_{j_2} \Pi_{j_3}$. Then $i = j_2$ satisfies all the conditions of lemma, since $\Pi_{j_1}$ and $\Pi_{j_3}$ are on opposite side of $\overleftrightarrow{OC_{j_2}}$.

	\subcase If $\Delta_1$ is visited twice, then let $P$ intersect $\Delta_1$ in two contiguous subpaths, say $Q_1, Q_2$. Without loss of generality, suppose that $\Pi_1 \subseteq Q_1$ and $\Pi_2, \Pi_3 \subseteq Q_2$.
	Let $e_1,e_2$ be pair of edges that connects $\Pi_1$ to $V \setminus \Pi_1$. Let $e_i = \set{x_i,p_i}$ where $p_i \notin \Pi_1$, and $x_i \in \Pi_1$. By \Cref{l:parallel_exits}, $\measuredangle(\overrightarrow{x_1p_1},\overrightarrow{x_2p_2}) \in \pi \pm \tfrac{\pi}{10}$.
	If possible, let $p_1, p_2$ be on the same side of $\overleftrightarrow{OC_1}$. Further, without loss of generality, let $\measuredangle(\overrightarrow{C_1O}, \overrightarrow{C_1p_1}), \measuredangle(\overrightarrow{C_1O},\overrightarrow{C_1p_2}) \in [0,\pi]$.
	Let $\theta_1 = \measuredangle(\overrightarrow{C_1O}, \overrightarrow{x_1 p_1})$ and $\theta_2 = \measuredangle(\overrightarrow{C_1O}, \overrightarrow{x_2 p_2})$. Since $p_1,p_2$ are on the same side of $\overleftrightarrow{OC_1}$, we must have
	\[ d + D_2 \sin \theta_1 \ge 0 \qquad d + D_2 \sin \theta_2 \ge 0 \]
	This implies that $\sin \theta_i \ge - \tfrac{d}{D_2}$. Since $\abs{\theta_1 - \theta_2} \in \pi \pm \tfrac{\pi}{10}$, it implies that 
	\[ \theta_i \in \sbrac{-\frac{\pi}{9}, \frac{\pi}{9}} \cup \sbrac{\pi - \frac{\pi}{9}, \pi + \frac{\pi}{9}} \]
	In fact, each of the two intervals contains exactly one $\theta_i$. Suppose $\theta_1 \in \sbrac{-\tfrac{\pi}{9}, \tfrac{\pi}{9}}$.
	Observe that for any $y_2 \in \Pi_2$, $\measuredangle(\overrightarrow{C_1 0}, \overrightarrow{x_1 y_2}) \le - \tfrac{\pi}{7}$ and for any $y_3 \in \Pi_3$, $\measuredangle(\overrightarrow{C_1 0}, \overrightarrow{x_1 y_3}) \ge \tfrac{\pi}{7}$. It follows that for any $y_2 \in \Pi_2$ and $y_3 \in \Pi_3$, $p_1$ is contained in $\angle y_2 x_1 y_3$. Since $\ell (x_1 y_2), \ell (x_1 y_3) \le D_1 + 4d \le D_2 \le \ell (x_1 p_2)$, the edge $e_1$ must intersect edge $y_2y_3$.
	Since $Q_2$ connects $\Pi_2,\Pi_3$, this implies that $e_1$ intersects and edge in $Q_2$, implying that $P$ is not planar!
	But since $P$ is the optimal Hamiltonian path, it must by planar, contradiction!
	
	This covers all the cases, completing the proof of lemma.
\end{proof}

\subsection{Proof of \texorpdfstring{\Cref{l:s_k_bound}}{lemma \ref{l:s_k_bound}}}
\label{ss:proof_s_k_bound}
Here we provide some more details for the proof of \Cref{l:s_k_bound} for sake of completeness.
\begin{lemma}
	\label{l:s_lower_bound}
	Consider the gadget $S = S(k)$ defined in \Cref{d:s_k} for large enough $k$. Let $p,q \in S$ be two points on the outer circle. Then the shortest Hamiltonian path from $p$ to $q$ completely covering $S$ has length at least $10 \pi + 8 - \frac{12\pi}{k}$.
\end{lemma}
\begin{proof}
	For this proof, we will approximate smaller segments along the circles by the arcs, the difference between them is $O(k^{-3})$, and since there are $O(k)$ of them, all the computations holds up to $O(k^2)$ error.

	Let $O_1$ denote the set of point on the inner circle of $S$ and let $O_2$ denote the set of points on the inner circle. Let $G = \set{g_1 = (-2,0), g_2 = (2,0)}$ be the set of gap vertices.
	Let $P$ be the shortest Hamiltonian path from $p$ to $q$ in $S$. To each vertex in $S$, we associate the length of the edge leaving that vertex in $P$ as the cost. Cost of each vertex in $O_1$ is at least $\tfrac{2\pi}{k}$ and cost of each vertex in $O_2$ is at least $\tfrac{4\pi}{k}$.
	Consider the path $P_1$ obtained by deleting $G$ from $P$. Then the path $P$ must leave and enter $O_2$ at least once, and the number of edges in $P$ that contain exactly one vertex in $O_2$ is even. Let $2t$ denote number of such edges. Thus, ever such edge costs at least $3 - \tfrac{4\pi}{k}$ additional length to the path $P_1$. This gives us the lower bound: 
	\[ \ell (P) \ge \ell (P_1) \ge 10 \pi + 2t\brac{3 - \frac{4\pi}{k}} \]
	For $k \ge \tfrac{4\pi}{3}$, this is an increasing function in $t$. Further, for $k \ge 4\pi$, value of this function at $t = 2$ is at least $10\pi + 8$. Therefore all the paths with $t \ge 2$ satisfy the required length condition.

	Suppose that $t = 1$, but the original path $P$ leaves $O_2$ more than once. Then, there must be a gap vertex that has both of it's neighbors in $O_2$. This implies $\ell (P) \ge \ell (P_1) + 4 - \tfrac{4\pi}{k}$. Since $t = 1$, we have $\ell (P_1) \ge 10 \pi + 6 - \tfrac{8\pi}{k}$, we get the bound
	\[ \ell (P) \ge 10\pi + 8 - \frac{12\pi}{k} \]
	which satisfies the requirement of the theorem.
	Similarly, if there is a vertex $g \in G$ such that both neighbors of $G$ lie in $O_1$, then this implies $\ell (P) \ge \ell (P_1) + 2 - \frac{4\pi}{k}$. This leads to exactly the same length bound as above.

	Hence, we are left with the case with path $P$ leaves and enters $O_2$ exactly once and both $g_1$ and $g_2$ have exactly one neighbor in $O_1$ and one in $O_2$.
	Suppose $p_1$ and $q_1$ are neighbors of $g_1$ and $g_2$ respectively in $O_1$. We claim that any Hamiltonian path $Q$ from $p_1$ to $q_1$ in $O_1$ must have length at least $\dist(p_1,q_1) + 2\pi - \frac{4\pi}{k}$. 

	Note that line $\overleftrightarrow{p_1 q_1}$ divides $O_1$ in two parts, say $H_1$ and $H_2$. For sake of notational convenience, we will include $p_1,q_1$ in both $H_1$ and $H_2$. Let $Q$ be denoted by $p_1 = v_0, \ldots, v_t = q_1$.
	For each $i$, define $\alpha_i$ to be the point in $H_1$ that is furthest away from $p_1$ and $\beta_i$ to be the point in $H_2$ that is furthest away from $p_1$. We claim that following holds for each $i$:
	\begin{enumerate}[nosep]
	  \item $v_i$ either equals $\alpha_i$ or $\beta_i$.
		\item $v_{i+1}$ is neighbor of either $\alpha_i$ or $\beta_i$.
	\end{enumerate}
	We will prove this by induction. First observe that $(1)$ holds for $i = 0$, since $v_0 = p_1$.
	Assume the strong induction hypothesis that both $(1),(2)$ holds for all $j < i$, and $(1)$ holds for $i$. We will show that this implies $(2)$ holds for $i$ and $(1)$ holds for $i+1$, completing the induction.
	Because of the induction hypothesis, $P$ must have visited all the vertices between $p_1$ and $\alpha_1, \beta_1$ in $\set{v_0, \ldots, v_i}$, since the set of visited vertices forms a contiguous segment on the circle.
	Suppose that $v_{i+1}$ is not a neighbor of either $\alpha_i$ or $\beta_i$. Then there is a vertex $v$ such that $v$ and $q_1$ are on the opposite sides of line $\overleftrightarrow{v_i v_{i+1}}$. Since $Q$ must visit $v$ before visiting $q_1$, it must intersect the line $\overleftrightarrow{v_i v_{i+1}}$.
	Since the segment $v_i v_{i+1}$ completely partitions the convex hull of $O_2$ into two parts, any path from $v$ to $q_1$ through the convex hull of $O_1$ must intersect $v_i v_{i+1}$, contradicting the planarity of the shortest path. This implies $(2)$.
	Further, since all the points between $\alpha_i$ and $\beta_i$ are already visited, $v_{i+1}$ is outside this segment, which implies that $v_{i+1}$ is either $\alpha_{i+1}$ or $\beta_{i+1}$.

	This proves the claim. The path $P$ must connect $H_1 \setminus \set{p_1, q_1}$ and $H_2 \setminus \set{p_1, q_1}$, and hence it crosses $\overleftrightarrow{p_1 q_1}$ at least once. Suppose it crosses the segment more than once.
	Let $v_a v_{a+1}$ and $v_b v_{b+1}$ be the two segments with least indices $a,b$ which cross $p_1 q_1$. Then $v_{a+1}$ is a neighbor of $p_1$ and $v_{b+1}$ is a neighbor of $v_a$. Let $p_2$ be neighbor of $p_1$ other than $v_{a+1}$. Note that $p_2$ is between $p_1$ and $v_a$.
	Consider the path $Q_1 = \overline{p_1 v_b}\,\overline{p_2v_{b+1}}$. We claim that this is shorter than the path	$Q_2 = \overline{p_1v_a}\,\overline{v_{a+1}v_b}v_{b+1}$, where $\overline{xy}$ denotes the path covering all the points between $x$ and $y$ which are on the same side of $\overleftrightarrow{p_1q_1}$ as $x,y$.
	Note that $\ell (\overline{p_1 v_a}) = \overline{p_2 v_{b+1}}$ and $\ell (\overline{v_{a+1} v_b}) + \ell (p_1 v_{a+1}) = \ell (\overline{p_1 v_b})$. Therefore, it suffices to show that 
	\[ \ell (v_a v_{a+1}) + \ell (v_b + v_{b+1}) - \ell (p_1 v_{a+1}) - \ell (v_b p_2) \ge 0 \]
	Let $\angle p_1 O v_{b+1} = \alpha \tfrac{2\pi}{k}$ and $\angle p_1 O v_b = \beta \tfrac{2\pi}{k}$ where $O$ is center of $O_1$. Then we can express all the lengths in terms of sines to get 
	\begin{align*}
			& \ell (v_a v_{a+1}) + \ell (v_b + v_{b+1}) - \ell (p_1 v_{a+1}) - \ell (v_b p_2) \\
		= & 2 \sin\brac{\frac{\alpha}{2} \cdot \frac{2\pi}{k}}
		+ 2 \sin\brac{\frac{\alpha + \beta}{2} \cdot \frac{2\pi}{k}}
		- 2 \sin\brac{\frac{\beta + 1}{2} \cdot \frac{2\pi}{k}}
		- 2 \sin\brac{\frac{1}{2} \cdot \frac{2\pi}{k}} \\
		= & 4 \sin\brac{\frac{2 \alpha + \beta}{2k} \pi} \cos\brac{\frac{\beta}{2k} \pi}
		- 4 \sin\brac{\frac{\beta + 2}{2k} \pi} \cos\brac{\frac{\beta}{2k} \pi} \\
		= & 4 \cos\brac{\frac{\beta}{2k} \pi}
		\brac{\sin\brac{\frac{2 \alpha + \beta}{2k} \pi} - \sin\brac{\frac{\beta + 2}{2k} \pi}} \\
		= & 8 \cos\brac{\frac{\beta}{2k} \pi} \cos\brac{\frac{\alpha + \beta + 1}{2k} \pi} \sin\brac{\frac{\alpha - 1}{2k} \pi} 
	\end{align*}
	Since $ 0 \le \beta \le \alpha + \beta + 1 \le k$, and $\alpha \ge 1$, all the angles in the expression above are between $0$ and $\tfrac{\pi}{2}$, which proves that this expression is always positive implying that $Q_1$ is shorter than $Q_2$.
	We can now replace portion of $Q$ corresponding to $Q_2$ by $Q_1$ to get a shorter path if $Q$ crossed $\overleftrightarrow{p_1 q_1}$ more than once. Hence, any optimal Hamiltonian path $Q$ must cross $\overleftrightarrow{p_1 q_1}$ exactly once.
	Therefore, $Q$ must look like $Q = \overline{p_1q_2}\,\overline{p_2q_1}$, where $q_2$ is a neighbor of $q_1$ that is on the opposite side of $p_2$. The points $p_1 p_2 q_1 q_2$ form a cyclic trapezoid, with $p_1 q_1$ and $p_2 q_2$ as diagonals. Therefore,
	\[ \ell (Q) \ge \dist(p_1 q_1) + 2 \pi - \frac{4\pi}{k} \]
	Note that we are missing the trivial case when $p_1$ and $q_1$ are adjacent, which can be verified to give the exact same bound. 
	
	Hence, portion of path $P$ in between two gap vertices has length $\ell (g_1 p_1) + \ell (p_1 q_1) + 2\pi - \frac{4\pi}{k} + \ell (q_1 g_2)$, which is at least $4 + 2 \pi - \tfrac{4\pi}{k}$. Combined with the cost of the path outside two gap vertices, which is at least $8 \pi + 4 - \tfrac{8\pi}{k}$, we get the lower bound
	\[ \ell (P) \ge 10 \pi + 8 - \frac{12\pi}{k} - O(k^{-2}) \]
	as required. Error term of $O(k^{-2})$ appears from approximating small chords of circle by the arc-lengths. 
\end{proof}
In particular, the lemma above gives the lower bound 
\[ \ell (P) \ge 10 \pi + 8 - O\brac{\frac{1}{k}} \]
	as required in \Cref{l:s_k_bound}. 

\section{Probability bounds for Observation \ref{obs:smiley_face_lemma}}
In this section we will provide precise bounds for constant $C_{\varepsilon,D}^S$ defined in \Cref{obs:smiley_face_lemma}. More precisely, we will prove the following version of \Cref{obs:smiley_face_lemma}:
\begin{lemma}
	\label{l:smiley_face_lemma}
	Let $d$ be a fixed integer. Let $\set{Y_1, Y_2, \ldots}$ be a sequence of points drawn uniformly at random from $[0,t]^d$ and $\YYY_n = \set{Y_1, \ldots, Y_n}$, where $t = n^{1/d}$. Given any finite point set $S \subseteq \R^d$ with $k$ points, any $\varepsilon > 0$ and any constant $D > 0$ such that 
	\begin{enumerate}[nosep]
	    \item $\varepsilon$ is smaller than distance between any two points in $S$; and
	    \item $D$ is larger than the diameter of $S$
	    \item $\exp(O(k \log(1 / \varepsilon))) = o(n)$
	\end{enumerate}
	$\YYY_n$ contains at least $C_{\varepsilon, D}^k n$ many $(\varepsilon, D)$-copies of $S$ with probablity $1 - o(1)$, where
	\[ C_{\varepsilon, D}^S = \exp \brac{-O(k \log (1/\varepsilon))} \]
	where $O$-notation hides constants dependent on $d$ and $D$.
\end{lemma}
\begin{proof}
	Divide $[0,t]^d$ into boxes of of side length $3D$. Let $B$ denote one such box. Consider a copy of $S$ centered at center of the box $B$. Let $s_1, \ldots, s_k$ be points in $S$. For any $j$, the probability that $Y_j$ at most $\varepsilon$ distance from $s_i$ is $\tfrac{V_d(\varepsilon)}{n}$ where $V_d(R)$ denotes volume of a sphere of radius $R$ in $\R^d$.
	Given a sequence of points $j_1, \ldots, j_k$, the probability that $Y_{j_i}$ is $\varepsilon$-close to $s_i$ for all $i$, and there are no other points inside $B$ is given by
	\[ \brac{\frac{V_d(\varepsilon)}{n}}^k \brac{1 - \frac{(3D)^d}{n}}^{n-k} \]
	The number of choices for the sequence $j_i$ is exactly
	\[ \frac{n!}{(n-k)!} \ge \brac{1 - \frac{k}{n}}^k n^k \]
	Since the events corresponding to all the sequences are disjoint, we can simply add these probabilities! Recall that $\log V_d(\varepsilon) = O(- d \log d + d \log \varepsilon)$, and that $1 - x \ge e^{-x/(1-x)} \ge e^{-2x}$ for $x \le \tfrac{1}{2}$. Using these two identities, and the two probability bounds above, we get a lower bound on probability that the box $B$ contains an $(\varepsilon,D)$-copy of $S$:
	\[ \exp\brac{-O\brac{dk \log d - dk \log \varepsilon + (n-k) \frac{(3D)^d}{n} + \frac{k^2}{n}}} = \exp(-O(k \log (1/\varepsilon))) \]
	where $O$ hides constants dependent on the dimension $d$.

	There are $\frac{n}{(3D)^d}$ such boxes $B$. Let these be denoted by $B_1, \ldots, B_s$. Let $\chi_i$ be the indicator random variable for box $B_i$ containing an $(\varepsilon,D)$-copy of $S$. Let $\chi = \sum_i \chi_i$. Since changing $Y_i$ for any particular $i$ only changes value of $\chi$ by $2$, we can apply McDiarmid's Inequality:
	
	\begin{lemma}[McDiarmid's Inequality]
	    Suppose a function $f: \mathcal{Z}_1 \times \cdots \mathcal{Z}_n \to \R$ satisfies that for all $i$
	    \[ \sup_{z_i' \in \mathcal{Z}_i} \abs[\big]{f(z_1, \ldots, z_{i-1}, z_i, z_{i+1}, \ldots , z_n) - f(z_1, \ldots, z_{i-1}, z_{i'}, z_{i+1}, \ldots, z_n)} \le c_i \]
	    then for independent random variables $Z_i \sim \mathcal{Z}_i$,
	    \[ \pr[\Big]{f(Z_1, \ldots, Z_n) - \ex{f(Z_1, \ldots, Z_n} \le -\varepsilon} \le \exp\brac{- \frac{2\varepsilon^2}{\sum_{i=1}^n c_i^2}} \]
	\end{lemma}
	
	Note that changing a single point $y \in \YYY_n$ changes $\chi$ by at most $2$. Therefore, we can use McDiarmid's Inequality with $c_i = 2$ for all $i$ to get that $\YYY_n$ contains at least 
	\[ \frac{1}{2} \exp(-O(k \log (1/\varepsilon)) - d \log (3D)) n = \exp(- O(k \log (1/\varepsilon))) n \]
	many $(\varepsilon,D)$-copies of $S$, with probability $1 - \exp(-\exp(-O(k \log (1 / \varepsilon)))n) = 1 - o(1)$ provided that $\exp(O(k \log ( 1 / \varepsilon))) = o(n)$.
\end{proof}

\end{document}